\newcommand*\mycirc[1]{%
	\begin{tikzpicture}[baseline=(C.base)]
		\node[draw,circle,inner sep=1pt,minimum size=3ex](C) {#1};
\end{tikzpicture}}
\newcommand{\bt}{\fontseries{b}\selectfont}
\tikzstyle{trans}=[font=\footnotesize]
\tikzstyle{single_s}=[rectangle,draw, double, trans, minimum size=8mm]
\tikzstyle{single_s2}=[rectangle,draw, double, trans, minimum
\tikzstyle{single_is}=[rectangle,draw, double, fill=yellow, trans, minimum
\tikzstyle{kwc_s}=[rectangle,draw, trans, minimum
\tikzstyle{kwc_is}=[rectangle,draw, trans, fill=yellow, minimum
\tikzstyle{kwc_cs}=[rectangle,draw, trans, minimum
\tikzstyle{state}=[circle,draw,trans, minimum size=8mm]
\tikzstyle{istate}=[circle,draw, trans, minimum size=8mm]
\tikzstyle{is_state2}=[rectangle,dotted,draw,trans, minimum size=8mm]
\tikzstyle{rstate}=[rectangle,draw,trans, minimum size=10mm,
\tikzstyle{rstate2}=[rectangle,draw,trans, minimum size=10mm,
\def\TO{{\raise .25ex\hbox{\scriptsize\sf to}}}
\def\T{\hbox to 1.5em{\hfill}}
\def\L#1{\raise .2ex\hbox{\scriptsize{$#1$}}&}
\newcommand{\mc}{\textsc{mc2}\xspace}
\newtheorem{replemma}{Lemma}
\newtheorem{reptheorem}{Theorem}
\begin{document}
\title{Certifying Phase Abstraction}
%
%

\author{
Nils Froleyks\inst{1}
 \and
Emily Yu\inst{2} \and
Armin Biere\inst{3} \and
Keijo Heljanko\inst{4} \inst{5}}
\authorrunning{N.~Froleyks et al.}

\institute{
Johannes Kepler University, Linz, Austria \and
Institute of Science and Technology Austria, Klosterneuburg, Austria\and
Albert–Ludwigs–University, Freiburg, Germany \and
University of Helsinki, Helsinki, Finland \and
Helsinki Institute for Information Technology, Helsinki, Finland}
\maketitle              
\begin{abstract}
	Certification helps to increase trust in formal verification
	of safety-critical systems which require assurance on their
	correctness.
	In hardware model checking, a widely used formal verification technique,
	phase abstraction is considered one of the most commonly used
	preprocessing techniques.
	We present an approach to certify an extended form of phase
	abstraction
	using a generic certificate format. As in earlier works our approach involves constructing a
	witness circuit with an inductive invariant property that certifies the correctness of
	the entire model checking process, which is then validated by an independent certificate checker. We
	have implemented and evaluated the proposed approach
	including
	certification
	for various preprocessing configurations  on
	hardware model checking competition benchmarks.
	As an improvement on
	previous work in this area, the proposed method is able to efficiently complete
	certification with an overhead of a
	fraction of model checking time.
\end{abstract}

\section{Introduction}

Over the past few decades, symbolic model
checking~\cite{mcprinciples,mcbook2nd,mchandbook} has been put forward
as one of the
most effective techniques in formal verification.
A lot of trust is placed into model checking tools when assessing the correctness of safety-critical systems.
However, model checkers themselves and the symbolic reasoning tools they rely on,
are exceedingly complex, both in the theory of their algorithms and their practical implementation.
They often run for multiple days, distributed across hundreds
of interacting threads, ultimately yielding a single bit of information signaling the verification result.
To increase trust in these tools, several approaches have attempted to implement fully verified model checkers in a theorem proving environment such as Isabelle~\cite{amjad2003programming,DBLP:conf/cav/EsparzaLNNSS13,sprenger1998verified}.
However, the scalability as well as versatility of those tools is often rather
limited. For example, a technique update tends to require the entire
tool to be re-verified. 

An alternative is to make model checkers provide machine-checkable proofs as certificates that can be validated by independent checkers~\cite{beyer2023bridging,DBLP:conf/sigsoft/0001DDH16,DBLP:journals/tosem/BeyerDDHLT22,DBLP:conf/fmcad/GriggioRT18,DBLP:journals/fmsd/GriggioRT21,DBLP:conf/hvc/KuisminH13,DBLP:conf/fmcad/MebsoutT16,Namjoshi01}, which is already a successful approach in SAT~\cite{heule2021proofs,heule2015proofs}, i.e., proofs are mandatory in the SAT competition since 2016~\cite{SAT-Competition-2016},
and they are a very hot topic in SMT~\cite{DBLP:journals/cacm/BarbosaBCDKLNNOPRTZ23,DBLP:conf/cade/BarbosaRKLNNOPV22,DBLP:conf/smt/HoenickeS22,DBLP:journals/corr/abs-2107-02354} and beyond~\cite{DBLP:journals/cacm/BarbosaBCDKLNNOPRTZ23}.
Crucially, these certificates need to be
simple enough to allow the implementation of a fully verified proof checker~\cite{DBLP:conf/itp/HeuleHKW17,kaufmann2022practical,DBLP:journals/jar/Lammich20},
and preferably verifiable ``end-to-end'', i.e., certifying all stages
of the model checking process, including all forms of preprocessing steps.


The approach in~\cite{biere2022stratified,cav21,fmcad23} introduces a generic
certificate format that can be directly generated from hardware model checkers via book-keeping.
More specifically, the certificate is in the form of a Boolean circuit that comes
with
an inductive invariant, such that it can be verified by six simple SAT checks. 
So far, it
has shown to be effective across several model checking
techniques, but has not covered phase abstraction~\cite{DBLP:conf/iccad/BjesseK05}. The experimental results from~\cite{biere2022stratified,cav21,fmcad23} also show performance challenges with more complex model checking problems. In this paper, we focus on refining the format for smaller certificates
while
accommodating additional techniques such as cone-of-influence analysis
reduction~\cite{mcbook2nd}.

Phase abstraction~\cite{DBLP:conf/iccad/BjesseK05} is a popular
preprocessing technique which tries to simplify a given model checking problem by detecting
and removing periodic signals that exhibit clock-like behaviors. These signals
are essentially the clocks embedded in circuit designs, often due to the design
style of multi-phase clocking~\cite{mony2005exploiting}.
Phase abstraction helps reduce circuit
complexity therefore making the backend model checking task
easier. Differently
from~\cite{baumgartner1999model,baumgartner2003abstraction} where the
concept was first suggested, requiring syntactic analysis and user inputs,
phase abstraction~\cite{DBLP:conf/iccad/BjesseK05} makes use
of
ternary simulation
to automatically identify a group of clock-like latches. Beside this, ternary
simulation has
also been
utilized in the context of temporal
decomposition~\cite{DBLP:conf/fmcad/CaseMBK09} for
detecting transient signals.

In industrial settings, due to the use of complex reset logic as well as circuit
synthesis optimizations, clock signals are sometimes delayed by a number of
initialization
steps~\cite{case2011approximate}.
To further optimize the verification procedure
we extend phase abstraction by exploiting the power of ternary simulation
to capture different classes of periodic signals
including those that are considered partially as clocks as well as equivalent signals~\cite{van1996exploiting}.
An optimal phase number is
computed based on globally extracted patterns, which then is used
to
unfold the
circuit multiple times. The resulting unfolded circuit further undergoes rewriting
and
cone-of-influence reduction, before it is passed on to a base model checker
for final
verification. To summarize our contributions are as follows:
\begin{enumerate}
	\item We formalize, revisit and extend the original phase abstraction~\cite{DBLP:conf/iccad/BjesseK05} by introducing periodic signals, that are then identified and removed for circuit reduction. Our technique also subsumes temporal decomposition~\cite{DBLP:conf/fmcad/CaseMBK09}.
	\item Building upon~\cite{biere2022stratified,cav21,fmcad23}, we propose a refined certificate format
	     for hardware model checking based on a new \textit{restricted simulation} relation. We demonstrate how to build such a certificate for extended phase abstraction.
	\item We present \mc, a certifying model checker that implements our
	      proposed preprocessing technique and generates certificates for the entire model checking process. We show empirically
	      that the approach requires small certification overhead in contrast to~\cite{biere2022stratified,cav21,fmcad23}.
\end{enumerate}

After background
in Section~\ref{sec:background}, Section~\ref{sec:periodicsignal}
introduces the notion of periodic signals. In Section~\ref{sec:pa} we present an extended variant of phase abstraction that simplifies the original model with
periodic signals. In Section~\ref{sec:certification} we define a refined certificate
format and present a general certification approach for phase abstraction. In
Section~\ref{sec:implementation} we describe the implementation of
\mc and then show the effectiveness of our new certification approach
in Section~\ref{sec:exp}.

\section{Background}\label{sec:background}

Given a set of Boolean variables $\mathcal{V}$, a literal $l$ is either
a variable $v\in \mathcal{V}$ or its negation $\neg v$. A
\emph{cube} is considered to be
a non-contradictory set of
literals. Let $c$ be such a cube over a set of variables $L$ and assume $L'$
are copies of $L$, i.e., each $l\in L$ corresponds bijectively to an $l' \in L'$.
Then we write
$c(L')$
to denote the resulting cube after
replacing the variables in $c$ with its corresponding variables in $L'$. For a Boolean
formula $f$, we write $f|_{l}$ and $f|_{\neg l}$ to denote the formula after
substituting all
occurrences of
the
literal $l$ with $\top$ and $\bot$ respectively. We use equality symbols
$\simeq$~\cite{DBLP:books/el/RV01/DegtyarevV01a} and $\equiv$ to denote syntactic
and semantic equivalence
and similarly $\rightarrow$ and $\Rightarrow$ to denote syntactic and semantic
logical implication.


\begin{definition}[Circuit]
	A circuit $C$ is represented by a quintuple $(I,L,R,F,P)$, where $I$
	and
	$L$ are (finite) sets of input and latch variables. The reset functions
	are given as $R=\{r_l(I,L)\mid l\in L\}$
	where the individual reset function
	$r_l(I,L)$
	for a latch $l \in L$
	is a Boolean formula over inputs $I$ and latches $L$. Similarly the set of transition
	functions is given as $F=\{f_l(I,L)\mid l\in L\}$.
	Finally $P(I,L)$ denotes a safety property corresponding to set of
	good states again encoded as a Boolean formula over the inputs and latches.
\end{definition}

This notion can be extended to more general circuits involving for instance
word-level semantics or even continuous variables by replacing in this definition
Boolean formulas
by corresponding predicates and terms in first-order logic modulo
theories.  For simplicity of exposition we focus in this work on Boolean semantics,
which matches the main application area we are targeting, i.e., industrial-scale
gate-level hardware
model checking.  We claim that extensions to ``circuits modulo theories'' are
quite straightforward.

A concrete
state is an assignment to variables $I\cup L$. Therefore the set of reset states of a circuit
is the set of satisfying assignments to
$R(L)=\bigwedge\limits_{l\in
	L}(l\simeq r_l(I,L))$.
\\[-1.5ex]
Note the use of syntactic equality ``$\simeq$'' in this definition.

As in previous work~\cite{biere2022stratified}
we assume acyclic reset functions.
Therefore $R(L)$ is always satisfiable.
A circuit with acyclic reset functions is called \emph{stratified}.

	As in bounded model checking~\cite{DBLP:series/faia/Biere21}, with $I_i$ and $L_i$ ``temporal'' copies of $I$ and $L$ at time step $i$,
the \emph{unrolling} of
a circuit up to length $k$ is expressed as:
$$U_k=\kern-.5em\bigwedge\limits_{i\in[0,k)}(L_{i+1}\simeq
	F(I_i,L_i)).$$ 

Cube simulation~\cite{fmcad23} subsumes ternary simulation such that a
lasso found by ternary simulation can also be found via cube simulation. A cube
simulation is a sequence of cubes $c_0,{\ldots},c_\delta, {\ldots},c_{\delta+\omega}$
over latches $L$ such that (1)~$R(L)\Rightarrow c_0$; (2) $c_i \wedge
	(L'\simeq F(I,L))\,\Rightarrow\,
	c'_{i+1}$ for all $i\in[0,\delta+\omega)$, where $c'_{i+1}$ is the primed copy of
$c_{i+1}$. It is called
a cube lasso if $c_{\delta+\omega}\wedge
	(L'\simeq F(I,L))\,\Rightarrow\,
	c'_{\delta}$.
	In which case $\delta$ is the stem length and $\omega$ is the loop length.
	For $\delta = 0$, the initial cube is already part of the loop and for $\omega = 0$,
	the lasso ends in a self-loop.
\section{Periodic Signals}\label{sec:periodicsignal}

In sequential hardware designs, signals that eventually stabilize to a constant, i.e.,
to $\top$ or $\bot$, after certain
initialization steps are called \emph{transient}
signals~\cite{DBLP:conf/fmcad/CaseMBK09,fmcad23},
whereas
oscillating signals have clock-like or periodic behaviors. A
simplest example of a clock is a latch that always oscillates between $\top$ and $\bot$.

Since
hardware designs typically consist of complex initialization logic, there are
occurrences of delayed oscillating signals, like clocks that start ticking after several reset steps, with a combination of transient and clock behaviours. We generalize this concept to
categorize latches as periodic
signals associated with a \textit{duration} (i.e., the number of time steps for
which a
signal is delayed) and a \emph{phase number} (i.e., the period length
in a periodic behavior). Moreover, our
generalization also captures equivalent and antivalent
signals~\cite{van1996exploiting}, as well as those that exhibit partial periodic behaviours. See Fig.~\ref{fig:sig} for an example.
\begin{figure*}
	\centering\vspace{-5pt}
	\includegraphics[width=\textwidth]{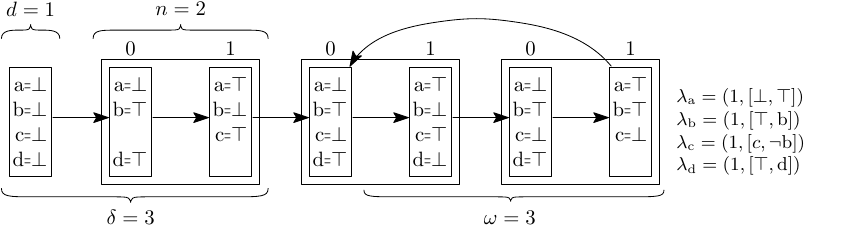}
	\caption{
		An example of a cube lasso over the latches $l \in L = \{a, b, c, d\}$.
		In the example the tall rectangles represent cubes as
		partial assignments, i.e., the second cube from the left
		is $(\neg a) \wedge b \wedge d$.
		Phase 0 and 1 are marked on top of the cubes.
		As shown, duration $d=1$ and phase number $n=2$ yield a high number of useful signals for this cube lasso.
		Each latch $l$ is associated with a periodic pattern $\lambda_l$ on the right describing its behaviors for phase 0 and 1.
		If a latch is missing from a cube for a certain phase, it has no constraint thus we
		use the equality of the latch to itself in the signal.
		Latch $a$ turns out to be a simple clock delayed by one step.
		Latches $b$ and $d$ behave clock-like but only in phase 0.
		Latch $c$ always has the opposite value of latch $b$ in phase 1.
		Note that we could also have $\neg c$ in phase 1 of signal $\lambda_{b}$ but
		choosing a single representative for a set of equivalent signals is beneficial for the
		following simplification steps.
	\label{fig:sig}
	}
\end{figure*}
\begin{definition}[Periodic Signal]\label{def:ps}
	Given a circuit $C=(I,L,R,F,P)$ and a cube lasso $c_0,{\ldots}c_\delta,
		{\ldots},c_{\delta+\omega}$. A periodic signal $\lambda_l$ for a latch $l\in L$ is
	defined as
	$\lambda_l=(d,[v^0,{\ldots},v^{n-1}])$ where $d\in \mathbb{N},$ $n\in
		\mathbb{N^+}$
	and $v^i$ is
	a latch
	literal or a constant, with
	$d\leq \delta$.
	We further require that there exist
	$k^0,k^1\in\mathbb{N^+}$ with $k^0\cdot n+d=\delta$ and
	$k^1\cdot n=\omega+1$ such that
	for all $i \in [0,n)$ and
	$j \in [0,k^{0}+k^{1})$ we have
	$c_{i+j\cdot n} \Rightarrow (l\simeq v^{i})$.
\end{definition}

For a signal $\lambda_l = (d,[v^0,{\ldots},v^{n-1}])$
we will write $\lambda_{l}^i$ to refer to
the $i$-th element of $[v^0,{\ldots},v^{n-1}]$, which we refer to as its phase.
See Fig.~\ref{fig:sig} for an example where $k^0 = 1$ and $k^1=2$.

\begin{figure*}
	\centering
	\includegraphics[width=\textwidth]{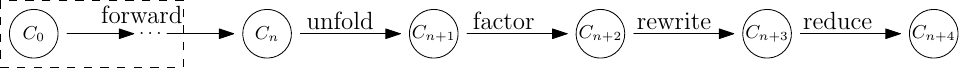}
	\caption{Circuit transformation using phase abstraction. }
	\label{fig:pa}
\end{figure*}

\section{Extending Phase Abstraction}
\label{sec:pa}

In this section, we revisit and extend phase abstraction by defining it as a sequence of
preprocessing
steps, as illustrated in Fig.~\ref{fig:pa}. Differently
from the approach in~\cite{DBLP:conf/iccad/BjesseK05}, we present phase
abstraction as part
of a compositional framework, that handles a more general class of periodic
signals. 
As our approach subsumes temporal decomposition adopted from the
framework in~\cite{fmcad23}, we first apply \emph{circuit
forwarding}~\cite{fmcad23} for duration $d$ (i.e., unrolling the reset states of a circuit by $d$ steps) before unfolding is performed.

Fig.~\ref{fig:pa} illustrates the flow of
phase
abstraction. The process begins by using cube simulation to identify
a set of periodic signals as defined in Section~\ref{sec:periodicsignal} and
computing an optimal duration and phase number based on a
selected cube lasso.
Once the circuit is unfolded $n$ times, factoring is performed by assigning
constant values to the clock-like signals as well as replacing latches with their
equivalent or antivalent
representative latches in each phase. Next, the
property is rewritten by applying structural rewriting techniques and the
circuit is further simplified using cone-of-influence reduction. Finally, the simplified
circuit ($C_{n+4}$ in Fig.~\ref{fig:pa}) is checked using a base model checking
approach such as
IC3/PDR~\cite{DBLP:conf/vmcai/Bradley11} or continue to be preprocessed further. 

In Fig.~\ref{fig:unfolded}, we show intuitively an example of a circuit with 4-bit states representing 0,...,9 and so on, where the initial state is 0. After forwarding the circuit by one step ($d=1$), the initial state becomes 1. Subsequently with an unfolding of $n=3$, every state (marked with rectangles) in the unfolded circuit consists of three states from the original circuit. We introduce the formal definitions below.

Unfolding a circuit simply means to copy the transition function multiple times to compute $n$ steps of the original circuit at once.
Each copy of the transition function only has to deal with a single phase and can therefore be optimized.

\begin{definition}[Unfolded circuit]\label{def:unfold}
	Given a circuit $C=(I, L, R, F, P)$ and a phase number
	$n\in\mathbb{N^+}$. The unfolded circuit $C'=(I',L',R',F',P')$ is:
	\begin{enumerate}
		\item $I'=I^0\cup\cdots\cup I^{n-1}$;  $L'=L^0\cup\cdots\cup L^{n-1}$.
		\item $R'=\{r'_l\mid l\in L'\}: $ for $l\in L^0, r'_l=r_l$;\\
		      \phantom{$R'=\{r'_l\mid l\in L'\}: $ }for $i\in (0,n), l^i\in L^i,r'_{l^i} = F(I^i, L^{i-1}).$
		\item $F'=\{f'_l\mid l\in L'\}:$ for $l\in L^0, f'_l=f_l(I^0,L^{n-1})$;\\
		      \phantom{$F'=\{f'_l\mid l\in L'\}:$ }for $i\in(0,n), l^i\in L^i, f'_{l^i} = f_{l^i}(I^i, L^{i-1}).$
		\item $P' = \bigwedge\limits_{i\in[0,n)} P(I^i,L^i).$
	\end{enumerate}
\end{definition}

\begin{figure}[t]
	\centering
	\includegraphics{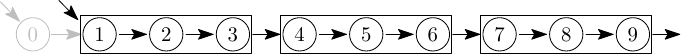}
	\caption{
		An example of a forwarded ($d=1$) and unfolded ($n=3$) circuit.
		The circles denote states in the original circuit ($0$ is the initial state).
		The rectangle are states in the unfolded circuit.
	}
	\label{fig:unfolded}
\end{figure}

We obtain a simplified circuit by replacing the periodic signals with
constants and equivalent/antivalent latches in the unfolded circuit.

\begin{definition}[Factor circuit]\label{def:fac}
	For a fixed duration $d$ and phase number $n$,
	given a $d$-forwarded and $n$-unfolded circuit $C=(I, L,R, F, P)$
	and a periodic signal with duration $d$ and phase number $n$ for each latch,
	the factor circuit
	$C'=(I, L, R', F', P)$
	is defined by:

	\vspace{4pt}
	\begin{minipage}{.45\textwidth}
		$R'=\{r'_l\mid l\in L\}:$
		\vspace{-6pt}
		\begin{itemize}
			\item $r'_{l^{i}} = \lambda^{i}_{l}, \text{ if } \lambda^{i}_{l} \in \{\bot, \top\};$
			\item $r'_{l^{i}} = r_{\lambda^{i}_{l}}, \text{if } \lambda_l^i\in L.$
			\item $r'_{l^{i}} = \neg r_{\neg\lambda^{i}_{l}}, \text{otherwise.} $
		\end{itemize}
	\end{minipage}%
	\begin{minipage}{0.5\textwidth}
		$F'=\{f'_l\mid l\in L\}:$
		\vspace{-6pt}
		\begin{itemize}
			\item $f'_{l^{i}} = \lambda^{i}_{l}, \text{ if } \lambda^{i}_{l} \in \{\bot, \top\};$
			\item $f'_{l^{i}} = f_{\lambda^{i}_{l}}, \text{if } \lambda_l^i\in L.$
			\item $f'_{l^{i}} = \neg f_{\neg\lambda^{i}_{l}}, \text{otherwise.} $
		\end{itemize}
	\end{minipage}
\end{definition}

Replaced latches will be removed by a combination of rewriting and cone-of-influence reduction introduced
in the following definitions.
There are various rewriting techniques
also including SAT
sweeping~\cite{fujita2015toward,kuehlmann2002robust,mishchenko2006dag,mishchenko2006improvements,mishchenko2005fraigs,zhu2006sat}.

\begin{definition}[Rewrite circuit]
	Given a circuit $C=(I, L, R, F, P)$, a rewrite circuit $C'=(I, L, R, F, P')$
	satisfies $P\equiv P'$.
\end{definition}

For a given circuit, we apply cone-of-influence reduction to obtain a reduced
circuit such that latches and inputs outside the cone of influence are
removed.

\begin{definition}[Reduced circuit]\label{def:reduced}
	Given a circuit $C=(I, L, R, F, P)$. The
	reduced circuit $C'=(I',L',R',F',P)$ is
	defined as follows:\vspace{0.6em}
	\begin{itemize}

		\begin{minipage}{.4\textwidth}

			\item $I'=I\cap coi(P);$
			\item $R'=\{r_l\mid l\in L'\}$;

		\end{minipage}%
		\begin{minipage}{0.5\textwidth}

			\item $L'= L\cap coi(P)$;

			\item $F'=\{f_l\mid l\in L'\}$, 

		\end{minipage}
	\end{itemize}
	where the cone of influence of the property $coi(P) \subseteq (I \cup L)$
	is defined as the smallest set of inputs and latches such that
	$vars(P)\subseteq coi(P)$
	as well as
	$\/vars(r_l)\subseteq coi(P)$ and
	$vars(f_l)\subseteq coi(P)$
	for all latches $\/l\in coi(P)$.
\end{definition}
\section{Certification}
\label{sec:certification}
We define a
revised certificate format that allows smaller and more
optimized certificates. We then propose a
method for producing certificates for phase abstraction.
The proofs for our main theorems can be found in the Appendix.

\subsection{Restricted simulation}

In the following, we define a new variant of the stratified simulation
relation~\cite{biere2022stratified}, which we call \emph{restricted
	simulation}, that considers the intersection of latches shared
between
two
given circuits as a common component.

\begin{definition}[Restricted Simulation]\label{def:ressim}
	Given stratified circuits $C'$ and $C$ with
	$C'=(I',L',R',F',P')$ 
	and
	$C=(I,L,R,F,P)$.
	We say $C'$
	\emph{simulates} 
	$C$
	under the \emph{restricted simulation relation} iff
	\begin{enumerate}
		\item\label{def:ressim:reset}For $l\in (L\cap L'), r_l(I,L)
			\equiv
			r'_l(I',L').$
		\item\label{def:ressim:trans} For $l\in  (L\cap L'), f_l(I,L) \equiv
			f'_l(I',L').$
		\item\label{def:ressim:prop} $P'(I',L')\Rightarrow P(I,L).$

	\end{enumerate}
\end{definition}

This new simulation relation differs from~\cite{biere2022stratified,cav21}, where
inputs were required to be
identical in
both circuits ($I=I'$), and latches in $C$ had to form a subset of latches in $C'$
($L\subseteq L'$).  Therefore, under those previous ``combinational''~\cite{cav21} or ``stratified''
\cite{biere2022stratified} simulation relations the simulating circuit
$C'$ cannot have fewer latches than $L$. This is a feature we need
for instance when incorporating certificates for cone-of-influence
reduction~\cite{mcbook2nd}, a common preprocessing technique.
It opens up the possibility to reduce certificate sizes substantially.

Still, as for stratified simulation,
restricted simulation can be verified by three simple SAT checks,
i.e., separately for
each of the three requirements in~Def.~\ref{def:ressim}.

\begin{definition}[Semantic independence]
	Let $\mathcal{V}$ be a set of variables and $v \in \mathcal{V}$. Then a formula
	$f(\mathcal{V})$ is said
	to
	be semantically independent of $v$ iff
	$$f(\mathcal{V})|_{v}\equiv
		f(\mathcal{V})|_{\neg v}.$$
\end{definition}

Semantic
dependency~\cite{fleury2023mining,lagniez2020definability,padoa1901essai,slivovsky2020interpolation}
allows us to assume that a formula only depends on a
subset of variables, which without loss of generality simplifies proofs used for
the rest of this section. The stratified assumption for reset functions entails no
cyclic dependencies
thus $R'(L')$ is satisfiable. A reset state in a circuit is simply a satisfying
assignment to
the reset predicate $R(L)$. Based on the reset
condition (Def.~\ref{def:ressim}.\ref{def:ressim:reset}), it is however
necessary to
show
that for every reset state in $C$ it can always be extended to a reset state in
$C'$, while the common variables have the same assignment in both
circuits. This is stated in the lemma below, and the proofs can be found in the Appendix. 

\begin{lemma}\label{lem:konly}
	Let $C=(I,L,R,F,P)$ and $C'=(I',L',R',F',P')$
	be two stratified circuits 
	satisfying the reset condition defined in
	Def.~\ref{def:ressim}.\ref{def:ressim:reset}. Then
	$R'(L\cap L')$ is semantically dependent only on their common variables.
\end{lemma}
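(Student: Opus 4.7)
The goal is to show that the formula $R'(L\cap L')=\bigwedge_{l\in L\cap L'}(l\simeq r'_l(I',L'))$ is semantically independent of every variable outside $(I\cap I')\cup(L\cap L')$. Its free variables lie in $(L\cap L')\cup I'\cup L'\subseteq I'\cup L'$, so the variables that are ``non-common'' but could still occur are exactly those in $(I'\setminus I)\cup(L'\setminus L)$.

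First I would fix an arbitrary $v\in(I'\setminus I)\cup(L'\setminus L)$ and reduce the claim to showing each conjunct $l\simeq r'_l(I',L')$ (for $l\in L\cap L'$) is semantically independent of $v$. Since $l$ is a common latch, $l\neq v$, so substituting $\top$ or $\bot$ for $v$ affects the conjunct only through $r'_l$; hence it suffices to prove that each $r'_l(I',L')$ with $l\in L\cap L'$ is semantically independent of $v$.

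Next I would invoke the reset condition of Definition~\ref{def:ressim}.\ref{def:ressim:reset}: for every $l\in L\cap L'$, $r_l(I,L)\equiv r'_l(I',L')$. Viewed as formulas over the joint variable set $I\cup L\cup I'\cup L'$, the left-hand side $r_l(I,L)$ syntactically does not mention $v$ (because $v\notin I\cup L$), and therefore is semantically independent of $v$. Since semantic equivalence preserves semantic (in)dependence, $r'_l(I',L')$ is also semantically independent of $v$. Taking the conjunction over $l\in L\cap L'$ then yields the lemma.

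The main obstacle will be making precise the step ``$r_l(I,L)\equiv r'_l(I',L')$ and $v\notin\mathrm{vars}(r_l)$ imply $r'_l$ is semantically independent of $v$''. The subtlety is that the two sides of ``$\equiv$'' are given a priori over different variable sets; the convention used throughout the paper is that both sides are interpreted over the union of their free variables, so that a syntactic absence on one side transfers to semantic independence on the other. Stating this transfer explicitly, perhaps as a short auxiliary observation about how ``$\equiv$'' commutes with the substitutions $f|_{v}$ and $f|_{\neg v}$, is the only non-routine ingredient; once it is in hand, the argument above is essentially immediate.
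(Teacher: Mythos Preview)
Your proposal is correct and follows essentially the same approach as the paper: both argue that for each common latch $l$ and each non-common variable $v$, the equivalence $r_l(I,L)\equiv r'_l(I',L')$ together with $v\notin\mathrm{vars}(r_l)$ forces $r'_l$ to be semantically independent of $v$, and then pass to the conjunction. The paper phrases this as a proof by contradiction on a single $r'_{l_b}$, whereas you give the direct version and isolate the ``transfer of independence across $\equiv$'' step explicitly; the underlying argument is the same.
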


In fact, semantic independence is a direct consequence of restricted simulation;
thus no separate check is required.
We make a further remark
that if
the reset function is dependent on an input
variable, then it has to be an input variable common to both circuits.

Based on this, we conclude with
the main theorem for restricted simulation such that $C$ is safe if $C'$ is safe (i.e., no bad state that violates the property is reachable from any initial state).

\begin{theorem}\label{thm:safety}
	Let $C=(I,L,R,F,P)$ and
	$C'=(I',L',R',F',P')$ be two stratified circuits, where $C'$ simulates $C$
	under restricted simulation. \\If
	$C'$ is
	safe, then $C$ is also safe.
\end{theorem}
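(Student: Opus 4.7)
The plan is to prove the contrapositive: assume $C$ is unsafe, exhibit a bad run $s_0,\ldots,s_k$ of $C$ ending in a state with $\neg P$, and build a corresponding run $s'_0,\ldots,s'_k$ of $C'$ ending in a state with $\neg P'$, contradicting safety of $C'$.

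The heart of the argument is a restricted simulation invariant: for every reachable state $s_i$ of $C$ there exists a reachable state $s'_i$ of $C'$ such that $s_i$ and $s'_i$ agree on the common variables $(I\cap I')\cup(L\cap L')$. I would prove this by induction on $i$. For the base case, choose the common inputs of $s'_0$ to equal those of $s_0$ and extend arbitrarily to inputs in $I'\setminus I$; because $C'$ is stratified, $R'$ uniquely forces all latches in $L'$. By Lemma~\ref{lem:konly}, the restriction of $R'$ to common latches depends only on common variables, and by Def.~\ref{def:ressim}.\ref{def:ressim:reset} it coincides with the corresponding restriction of $R$, so the common-latch part of $s'_0$ matches that of $s_0$.

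For the inductive step, I would first record the transition analogue of Lemma~\ref{lem:konly}: by the same semantic-independence argument, the equivalence $f_l\equiv f'_l$ for $l\in L\cap L'$ (Def.~\ref{def:ressim}.\ref{def:ressim:trans}) forces both sides to be semantically independent of any variable outside $(I\cap I')\cup(L\cap L')$. Given $s'_i$ agreeing with $s_i$ on common variables, I set the common inputs of $s'_{i+1}$'s input part equal to those of $s_{i+1}$, extend arbitrarily on $I'\setminus I$, and compute each latch via $F'$; the previous observation then guarantees that the common-latch values of $s'_{i+1}$ coincide with those of $s_{i+1}$, preserving the invariant.

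For the concluding step, I combine $s_k$ and $s'_k$ into a single joint assignment, well-defined since they agree on common variables. From $s_k\models\neg P$ the joint assignment satisfies $\neg P$, and applying Def.~\ref{def:ressim}.\ref{def:ressim:prop} contrapositively it also satisfies $\neg P'$; since $P'$ only mentions variables in $I'\cup L'$, this yields $s'_k\models\neg P'$, contradicting safety of $C'$. The main obstacle is establishing the transition analogue of Lemma~\ref{lem:konly} cleanly and handling the asymmetry between the input/latch sets of $C$ and $C'$ when specifying the extension, so that the constructed sequence is genuinely a run of $C'$ rather than merely a sequence of assignments.
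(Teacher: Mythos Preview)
Your proposal is correct and follows essentially the same approach as the paper: the paper also argues by contradiction, invokes Lemma~\ref{lem:konly} (and proves the transition analogue you anticipate as a separate lemma) to extend a bad run of $C$ step by step to a run of $C'$ agreeing on common variables, and then applies Def.~\ref{def:ressim}.\ref{def:ressim:prop} to obtain the contradiction. Your treatment of the final step via a joint assignment is in fact slightly more explicit than the paper's.
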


Intuitively, if there is an unsafe trace in $C$, Def.~\ref{def:ressim}.\ref{def:ressim:reset} together with Lemma~\ref{lem:konly}
allow us to find a simulating reset state and transition it with Def.~\ref{def:ressim}.\ref{def:ressim:trans} to a simulating state also violating the property in $C'$ by Def.~\ref{def:ressim}.\ref{def:ressim:prop}.
Here a state in $C'$ simulates a state in $C$ if they match on all common variables.
Building on this, we present
witness circuits as a format for certificates.
Verifying the restricted simulation relation requires three SAT checks, and another three SAT checks are needed for validating the inductive invariant~\cite{cav21}. Therefore certification requires in total six SAT checks as well as a
polynomial time check for reset stratification. 

\begin{definition}[Witness circuit]\label{def:wit}
	Let $C=(I,L,R,F,P)$ be a stratified circuit. A witness circuit $W=(J,M,S,G,Q)$
	of $C$ satisfies the following:
	\begin{itemize}
		\item $W$ simulates $C$ under the restricted simulation relation.
		\item $Q$ is an inductive invariant in $W$.
	\end{itemize}
\end{definition}

The witness circuit format subsumes~\cite{biere2022stratified,fmcad23}, thus
every witness circuit in their format is also valid under
Def.~\ref{def:wit}.

\begin{figure}
	\centering
	\includegraphics[width=\textwidth]{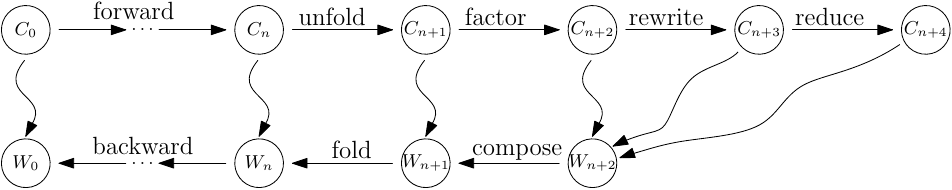}
	
	\caption{Certification for (extended) phase abstraction. Base model checking is
		performed on
		circuit $C_{n+4}$, which produces a witness circuit $W_{n+2}$,
		that certifies $C_{n+2}, C_{n+3}, \text{ and } C_{n+4}$.
		We construct
		step-wise to
		obtain
		$W_0$, which is a certificate for the entire model checking
		procedure.\label{fig:pacert}}
\end{figure}

\subsection{Certifying Phase Abstraction}

The certificate format is generic, subsumes~\cite{fmcad23}, and is designed to potentially be used as a standard in future hardware model checking competitions. We proceed to demonstrate how a certificate can be constructed for a model checking pipeline that includes phase abstraction. The theorems in this section state that this construction guarantees that a certificate will be produced.
We illustrate our certification pipeline in Fig.~\ref{fig:pacert}. After phase
abstraction and base model checking, we can build a certificate backwards
based on the certificate produced by the base model checker. 
The following theorem states that the witness
circuit of the reduced circuit serves as a witness circuit for the original circuit
too.

\begin{theorem}\label{thm:reduced}
	Given a circuit $C=(I,L,R,F,P)$ and its reduced circuit
	$C'=(I',L',R',F',P')$. A witness circuit of $C'$ is
	also a
	witness circuit of $C$.
\end{theorem}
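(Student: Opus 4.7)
The plan is to unpack what it means for $W = (J, M, S, G, Q)$ to be a witness circuit and derive the witness conditions for $C$ from those already assumed for $C'$. The inductive invariant requirement is purely internal to $W$, so it carries over to $C$ for free. Everything else reduces to transferring the restricted simulation of Def.~\ref{def:ressim} from $W$ and $C'$ to $W$ and $C$: I must re-establish the reset and transition equivalences on the common latches as well as the implication on the property.

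The key lever is a syntactic observation about cone-of-influence reduction. By Def.~\ref{def:reduced}, every latch $l \in L' = L \cap coi(P)$ satisfies $vars(r_l), vars(f_l) \subseteq coi(P) = I' \cup L'$, and $vars(P) \subseteq I' \cup L'$ as well. Thus, as Boolean formulas, $r_l(I,L)$ and $r_l(I',L')$ coincide, and similarly for $f_l$ and for $P$: restricting the formal parameter set to $I' \cup L'$ loses nothing, because none of these formulas actually depends on any variable in $(I \cup L) \setminus (I' \cup L')$.

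Given this, I would discharge each condition of Def.~\ref{def:ressim} in turn. For any $l \in L \cap M$, I first argue $l \in L' \cap M$: a latch shared between $W$ and $C$ must also be a latch of $C'$, for otherwise an incidental name clash with a discarded latch in $L \setminus L'$ can be broken by $\alpha$-renaming in $W$ without affecting the simulation with $C'$ or the inductiveness of $Q$, neither of which references such names. Then the simulation of $C'$ by $W$ gives $s_l(J,M) \equiv r_l(I',L')$, and the syntactic observation upgrades this to $s_l(J,M) \equiv r_l(I,L)$, i.e., Def.~\ref{def:ressim}.\ref{def:ressim:reset}. The identical chain with $f_l$ and $g_l$ handles Def.~\ref{def:ressim}.\ref{def:ressim:trans}. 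For the property, $Q(J,M) \Rightarrow P(I',L')$ from the $W$-$C'$ simulation combines with $P(I',L') \equiv P(I,L)$ to yield Def.~\ref{def:ressim}.\ref{def:ressim:prop}.

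The main obstacle I anticipate is not the mathematical content, which collapses to the one-line closure property of $coi(P)$, but precisely the naming step above. Assuming the paper's witness constructions do not introduce latch names from $L \setminus L'$, which is a natural and easily enforced convention, the proof is otherwise a direct bookkeeping exercise transferring each clause of the restricted simulation definition.
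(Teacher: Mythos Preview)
Your proposal is correct and follows essentially the same route as the paper: show that the three restricted-simulation conditions transfer from $(W,C')$ to $(W,C)$ using the fact that the reset and transition functions of latches in $L'$ and the property $P'=P$ only reference variables in $I'\cup L'$, and note that the inductive-invariant condition is internal to $W$. You are in fact more careful than the paper on the naming point: the paper simply asserts that $K=L'\cap M'$ is the common latch set between $C$ and $W'$, silently assuming $M'\cap(L\setminus L')=\emptyset$, whereas you flag this as a convention that must be enforced (or handled by $\alpha$-renaming).
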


The outcome of rewriting is a
circuit with a simplified property that maintains semantic equivalence with the
original
property. Therefore in our framework, the certificate for the simplified property
is also
valid for the original property.  Furthermore, certificates can be optimized by rewriting at any stage.
We summarize this in the following
proposition.

\begin{proposition}
	Given a circuit $C$ and its rewrite circuit $C'$. A witness circuit of $C'$ is
	also a
	witness circuit of $C$.
\end{proposition}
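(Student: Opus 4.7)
The plan is to verify the two clauses of Def.~\ref{def:wit} (restricted simulation and inductive invariant) for $W$ with respect to $C$, by reusing that $W$ is already a witness circuit of $C'$ and exploiting the fact that a rewrite circuit differs from the original only in its property, with $P \equiv P'$.

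Let $C=(I,L,R,F,P)$, $C'=(I,L,R,F,P')$ with $P\equiv P'$, and let $W=(J,M,S,G,Q)$ be a witness circuit of $C'$. First, I would check the three conditions of restricted simulation (Def.~\ref{def:ressim}) for $W$ and $C$. Conditions~\ref{def:ressim:reset} and~\ref{def:ressim:trans} concern only the common latches $L\cap M$ and the functions $R,F$, which are shared verbatim between $C$ and $C'$; hence these conditions transfer directly from the corresponding checks for $W$ and $C'$. For condition~\ref{def:ressim:prop}, we need $Q(J,M)\Rightarrow P(I,L)$; from $W$ being a witness of $C'$ we have $Q(J,M)\Rightarrow P'(I,L)$, and since $P'\equiv P$ by the rewrite-circuit definition, transitivity of semantic implication gives the required $Q\Rightarrow P$.

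Second, the inductive invariant clause is a property formulated entirely over $W$'s own variables, transition functions, and reset functions (none of which involve $P$ or $P'$), so it continues to hold unchanged. Together these observations establish that $W$ satisfies Def.~\ref{def:wit} with respect to $C$.

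The main ``obstacle'' here is essentially a sanity check: the only non-trivial point is that condition~\ref{def:ressim:prop} of restricted simulation is formulated with semantic implication $\Rightarrow$ rather than syntactic implication $\rightarrow$, which is precisely what allows the semantic equivalence $P\equiv P'$ supplied by the rewrite-circuit definition to be composed with $Q\Rightarrow P'$. Had the definition demanded syntactic equality, the proposition would fail for nontrivial rewritings; so the proof is really a confirmation that our chosen formulation of the certificate format is robust under property-level rewriting, and in particular that certificates may be optimized by rewriting at any stage of the pipeline.
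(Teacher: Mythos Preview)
Your proof is correct and matches the paper's reasoning. The paper does not give a formal proof of this proposition at all; it merely states in the preceding sentence that ``the certificate for the simplified property is also valid for the original property'' since rewriting preserves semantic equivalence of the property, and your argument is exactly the detailed unpacking of that remark via Def.~\ref{def:ressim} and Def.~\ref{def:wit}.
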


We define the composite witness circuit to combine the certificates
for cube
simulation and the factor circuit.

\begin{definition}[Composite witness circuit]\label{def:compwit}
	Given a stratified circuit $C=(I,L,R,F,P)$ and its factor circuit
	$C'=(I',L',R',F',P')$, and the
	unfolded loop invariant
	$\phi = \vee_{i\in[0,m)}\wedge_{j\in[0,n)}c_{i*n+j+d}$,
	with $m=(\delta+\omega-d+1)/n$,
	obtained from the cube lasso.
	Let
	$W'=(J',M',S',G',Q')$ be a
	witness circuit
	of $C'$. The composite witness circuit $W=(J,M,S, G,Q)$ is defined as
	follows:\vspace{0.6em}
	\begin{enumerate}
		\begin{minipage}{.52\textwidth}
			\item $J=I\cup J'$.
			\item $M=L\cup (M'\backslash L')$.
			\item $S=\{s_l\mid l\in M\}$:\vspace{-7pt}
			\begin{enumerate}
				\item for $l\in L, s_l = r_l$;
				\item for $l\in M'\backslash L', s_l=s'_l.$
			\end{enumerate}
		\end{minipage}%
		\begin{minipage}{0.6\textwidth}
			\item $G=\{g_l\mid l\in M\}$:\vspace{-7pt}
			\begin{enumerate}
				\item for $l\in L, g_l = f_l$;
				\item  for $l\in M'\backslash L', g_l=g'_l.$
			\end{enumerate}

			\item $Q=\phi(L)\wedge Q'(J',M').$
		\end{minipage}
	\end{enumerate}
\end{definition}

\begin{theorem}~\label{thm:compwit}
	Given circuit $C=(I,L,R,F,P)$, and factor circuit $C'=(I',L',\\R',F',P')$. Let
	$W'=(J',M',S',G',Q')$ be a witness circuit of $C'$, and $W=(J,M,S, G,Q)$
	constructed as in Def.~\ref{def:compwit}. Then $W$ is a witness
	circuit of $C$.
\end{theorem}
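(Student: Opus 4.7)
My plan is to verify the two conditions required by Def.~\ref{def:wit}: that $W$ simulates $C$ under the restricted simulation of Def.~\ref{def:ressim}, and that $Q$ is an inductive invariant of $W$. By the construction in Def.~\ref{def:compwit} we have $L \subseteq M$, so the common latches of $W$ and $C$ are exactly $L$; I will assume the standard convention that the extra latches $M' \setminus L'$ introduced by the inner witness are disjoint from $L$, which makes the case splits in $S$ and $G$ unambiguous.

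For the restricted simulation, conditions~(1) and~(2) of Def.~\ref{def:ressim} are immediate from Def.~\ref{def:compwit}: for every $l \in L$ we have $s_l \simeq r_l$ and $g_l \simeq f_l$, which gives syntactic (and hence semantic) agreement. For condition~(3) I need $Q(J,M) \Rightarrow P(I,L)$. Since $W'$ is a witness of $C'$ we have $Q'(J',M') \Rightarrow P'(I',L')$, so it suffices to prove $\phi(L) \wedge P'(I',L') \Rightarrow P(I,L)$. Here I would exploit that $P'$ is the per-phase conjunction of $P$ obtained by unfolding (Def.~\ref{def:unfold}) and preserved by factoring (Def.~\ref{def:fac}), and that $\phi(L)$ pins the latch values at each phase to those prescribed by the periodic signals $\lambda_l$ from Def.~\ref{def:ps}, which are exactly the substitutions used to build $F'$ and $R'$. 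Under any assignment satisfying $\phi$, the two formulations then coincide on one phase copy of $L$, transferring safety from $P'$ back to $P$.

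For inductiveness of $Q$ in $W$ I would treat its two conjuncts separately. Initiation: a reset state of $W$ assigns $L$ according to $R$ (first case of $S$), so by property~(1) of the cube lasso $R(L) \Rightarrow c_0$, and hence $\phi(L)$ holds via its first disjunct; the $(M' \setminus L')$-part agrees with the reset of $W'$ and, once completed on $L'$ via the periodic signals, yields a reset state of $W'$ where $Q'$ holds by the initiation of the inner witness. Preservation: under $G$, the $L$-block evolves via $F$, and $\phi$ is preserved because it encodes the loop-closure of a cube lasso under $F$ (using item~(2) and the lasso condition of cube simulation); the $(M' \setminus L')$-block evolves via $g'$, preserving $Q'$ by its inductiveness in $W'$; the two blocks do not interact because $Q'$ is semantically independent of $L$ and $\phi(L)$ is independent of $M' \setminus L'$, so the proof factors cleanly.

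The main obstacle is condition~(3) of the simulation, i.e.\ the implication $\phi(L) \wedge Q'(J',M') \Rightarrow P(I,L)$, since $I, L$ and $I', L'$ are formally disjoint variable sets whose semantic coupling is mediated only through the periodic signals $\lambda_l$ and the structure of the cube lasso. I would isolate this in an auxiliary substitution lemma stating that every $L$-assignment consistent with $\phi$ extends canonically along the $\lambda_l$'s to an $L'$-assignment on which $P'(I',L')$ reduces to $P(I,L)$ at the matching phase, at which point the implication follows from $Q' \Rightarrow P'$ and the construction of $P'$.
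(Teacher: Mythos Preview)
Your proposal rests on a misreading of Def.~\ref{def:fac}. The factor circuit satisfies $I'=I$, $L'=L$ and $P'=P$: factoring only replaces reset and transition functions, leaving inputs, latches and the property untouched. Hence the ``main obstacle'' you identify does not exist. From $Q=\phi\wedge Q'$ and $Q'\Rightarrow P'=P$ you get $Q\Rightarrow P$ immediately; no substitution lemma, no phase-matching argument is needed. Your assumption that $I,L$ and $I',L'$ are ``formally disjoint variable sets'' is simply false here, and the remark that disjointness of $M'\setminus L'$ from $L$ is a ``convention'' is vacuous once $L'=L$. The paper's proof handles simulation exactly this way (plus a one-line stratification check you omit) and then defers the inductiveness argument to~\cite{fmcad23}.

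The more serious gap is in your preservation argument. You claim that ``$Q'$ is semantically independent of $L$'' and that the two conjuncts of $Q$ ``do not interact''. Since $L'=L\subseteq M'$, the formula $Q'(J',M')$ \emph{does} depend on $L$, and in $W$ the latches in $L$ evolve via $F$, whereas inductiveness of $Q'$ is only given with respect to $G'$, which on $L'\cap M'$ agrees with $F'$ (restricted simulation of $C'$ by $W'$). The transitions $F$ and $F'$ differ exactly by the periodic-signal substitutions of Def.~\ref{def:fac}. The role of $\phi$ is not merely to be a separately preserved conjunct: it guarantees that in every reachable state of $W$ each latch $l^i$ has value $\lambda_l^i$, so that one $F$-step and one $F'$-step coincide on the relevant latches and the $g'$-update of $M'\setminus L'$ sees the same inputs as it would inside $W'$. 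Only then can you invoke inductiveness of $Q'$ in $W'$. Your ``factors cleanly'' claim skips precisely this coupling, which is the actual content of the proof.
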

\begin{figure}
	\centering
	\includegraphics[width=\textwidth]{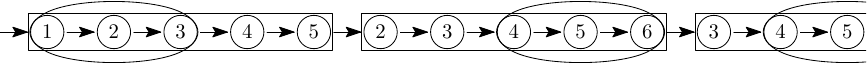}
	\caption{
		Every fully initialized state of a 3-folded witness circuit
		contains 3 original states that form an unfolded state.
		Two consecutive 3-folded states contain either the same unfolded states
		or two states consecutive in the unfolded circuit.
		\label{fig:fold}
	}
\end{figure}

In the construction of an $n$-folded witness circuit from the unfolded witness $W'$, a single instance of $W'$'s latches ($N$), yet multiples of the original latches $L$ are used.
As illustrated in Fig~\ref{fig:fold},
these $L$ record a history, contrasting with their role in the unfolded circuit where they calculate multi-step transitions.

\begin{definition}[$n$-folded witness
		circuit]\label{def:unfoldwit}
	Given a circuit $C=(I, L, R, F, P)$ with a phase number $n\in \mathbb{N^+}$,
	and its
	unfolded circuit
	$C'=(I',L',R',F',P')$. Let $W'=(J',M',S',G',Q')$ be the witness circuit of $C'$. The
	$n$-folded witness circuit $W=(J,M,S,G,Q)$ is defined as follows:
	\begin{enumerate}
		\item $J=I^0\cup J^0$, where $I^0$ and $J^0$ are $I$
		      and $J'$ respectively.
		\item $M=I^1\cdots I^m\cup L^0\cdots L^m \cup N \cup J^1
			      \cup
			      \{b^0\cdots b^m,e^0\cdots e^{n-2}\},$
		      \\where $m=2\times n-2$,
		      $N=M'\setminus L'$,
		      and $I^i,L^i$
		      are
		      copies of $I$ and $L$, and $J^1$ is a copy of $J'$.
		\item $S=\{s_l\mid l\in M\}$:
		      \begin{enumerate}
			      \item $s_{b^0}=\top;$

			      \item For $i\in (0,m], s_{b^i}=\bot.$
			      \item For $i\in [0,n-1), s_{e^i}=\bot.$
			      \item For $l\in L^0, s_l=r'_l.$
			      \item For $l\in (I^1\cdots I^m\cup L^1\cdots L^m\cup J^1),
				            s_l=l.$
			      \item For $l\in N, s_l=s'_l.$

		      \end{enumerate}

		\item $G=\{g_l\mid l\in M\}$:
		      \begin{enumerate}
			      \item $g_{b^0}=\top$.
			      \item For $i\in [1,m], g_{b^i}= b^{i-1}.$
			      \item $g_{e^0}=b^{n-1}\wedge \neg e^{n-2}$.
			      \item For $i\in[1,n-1), g_{e^i}=e^{i-1}\wedge\neg
				            e^{n-2}$.
			      \item For $l\in L^0, g_l=f_l.$
			      \item For $l^1\in J^1,g_{l^1}=l^0.$
			      \item For $i\in[1,m]$, $l^i \in (I^i\cup L^i),
				            g_{l^i}=l^{i-1}.$
			      \item For $l\in N, \\g_l=ite(
				            \begin{aligned}[t]
					             & e^{n-2},
					            g'_l(J^1,M'\cap(I^{m-n+1}\cdots I^m\cdots L^{m-n+1}\cdots
					            L^m\cup N)),
					            l).
				            \end{aligned}$
		      \end{enumerate}
		\item $Q=\bigwedge\limits_{i\in[0,6]}q^i:$
		      \begin{enumerate}

			      \item $q^0 = P(I^0, L^0).$
			      \item $q^1=b^0.$
			      \item $q^2= \bigwedge\limits_{i\in[1,m]} (b^i\rightarrow
				            b^{i-1})$.
			      \item $q^3= \bigwedge\limits_{i\in[1,m]} (b^i\rightarrow
				            (L^i\simeq
				            F(I^{i-1},L^{i-1}) ))$.
			      \item $q^4= \bigwedge\limits_{i\in[1,m]}((\neg b^i\wedge
				            b^{i-1})\rightarrow (R(L^{i-1})\wedge
				            S'(N)))$.
			      \item $q^5=b^m\rightarrow
				            \begin{aligned}[t]
					             & (\bigvee\limits_{i\in[0,n)}
					            ((\bigwedge\limits_{j\in[i,n-1)}\neg e^j)\wedge
					            (\bigwedge\limits_{j\in[0,i)} e^j)\wedge \\
					             & Q'(J^0,M'\cap(L^i\cdots\cup
					            L^{i+n-1}\cup N)) ).
				            \end{aligned}$
			      \item $q^6=\bigwedge\limits_{i\in[1,n-2]} (e^i\rightarrow
				            e^{i-1})$.
			      \item $q^7=\bigwedge\limits_{i\in [0,n-2] }(e^i\rightarrow b^{n+i})$.
			      \item $q^8 = \bigwedge\limits_{i\in [0,n-2)} ((\neg b^m\wedge
				            b^{n+i})\rightarrow e^i)$.
		      \end{enumerate}
	\end{enumerate}
\end{definition}

The $b^{i}$s are used for encoding initialization.
So that inductiveness is ensured when not all copies are initialized.
The $n-1$ bits $e^{i}$ are used to determine which set of $n$
consecutive original states form an unfolded state (a state in the unfolded circuit).
This information is used to
determine on which copies the unfolded property needs to hold and to transition
the latches in $N$ (the part of the witness circuit added by the backend model checker) once every $n$
steps.

\begin{theorem}\label{thm:nfolded}
	Given a circuit $C=(I, L, R, F, P)$ with a phase number $n\in\mathbb{N^+}$,
	its
	unfolded cicuit
	$C'=(I',L',R',F',P')$ with a witness circuit $W'=(J',M', S', G', Q')$.
	Let $W=(J,
		M, S, G, Q)$ be the circuit constructed as in
	Def.~\ref{def:unfoldwit}. Then $W$ is a witness circuit of $C$.
\end{theorem}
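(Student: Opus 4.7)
The plan is to verify the two clauses of Def.~\ref{def:wit}: that $W$ simulates $C$ under restricted simulation, and that $Q$ is an inductive invariant in $W$. With the identifications $I^0 = I$ and $L^0 = L$, the common inputs and latches of $C$ and $W$ are exactly $I$ and $L$, so the three conditions of Def.~\ref{def:ressim} are immediate from the construction: $s_l = r'_l = r_l$ for $l \in L^0$ (reset), $g_l = f_l$ for $l \in L^0$ (transition), and $Q \Rightarrow q^0 = P(I^0, L^0)$ (property). The substantive work is proving $S \Rightarrow Q$ and that $Q$ is preserved by the transition $G$; the check $Q \Rightarrow P$ is just $q^0$.

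Before tackling those two SAT-style checks, I would prove two bookkeeping lemmas about the auxiliary signals. First, the $b^i$'s form a monotone ramp: at step $t$ of any run of $W$, $b^i$ holds iff $t \geq i$, because $g_{b^0} = \top$ and $g_{b^i} = b^{i-1}$; in particular after $m = 2n-2$ steps every $b^i$ is true and stays true, which is what $q^2$ requires. Second, once $b^m$ holds, the pattern $(e^0,\ldots,e^{n-2})$ cycles through the $n$ configurations in which a prefix of the $e^i$'s is $\top$ and the rest $\bot$; the unique index $i$ with $e^0,\ldots,e^{i-1}$ true and $e^i,\ldots,e^{n-2}$ false picks out the current phase of the fold, and $e^{n-2}$ is true precisely at the last step of each $n$-cycle. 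The conjuncts $q^1$, $q^6$, $q^7$, $q^8$ are pure constraints on $b^i, e^i$ and are inductive by direct unfolding of their transition functions together with these two lemmas.

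Next, I would show that $q^3$ and $q^4$ jointly encode a faithful history of $C$: whenever $b^i$ holds, $L^0, L^1, \ldots, L^i$ is the sequence of latch states of some run of $C$ with $L^i$ the earliest, and whenever $b^{i-1} \wedge \neg b^i$ holds the latch $L^{i-1}$ is a reset state of $C$ and $N$ satisfies $S'(N)$. Both properties follow from the shift rules $g_{l^i} = l^{i-1}$ together with the initialization $s_l = r'_l = r_l$ on $L^0$ and $s_l = s'_l$ on $N$. From this and Lemma~\ref{lem:konly}, the conjunct $q^0$ holds at reset: the initial latching can be completed into a reset state of the unfolded circuit $C'$ by setting $L^1 = F(I^1, L^0), \ldots, L^{n-1} = F(I^{n-1}, L^{n-2})$ with $N$ satisfying $S'$, on which $Q'$ holds by the witness hypothesis and entails $P' \Rightarrow P(I^0, L^0)$; preservation of $q^0$ is then delivered by $q^5$ once $b^m$ becomes true, and by the history invariant before that.

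The technical core, and the step where I expect the bulk of the work, is inductiveness of $q^5$. I would split on the value of $e^{n-2}$ in the premise. When $e^{n-2} = \bot$, the \emph{ite} guard leaves every $l \in N$ unchanged while the history shifts uniformly by one, so the active window named by the next-step phase $i+1$ is exactly the old window at phase $i$; hence $Q'$ evaluated on the new projection of $M'$ is the same formula instance that held before, by hypothesis. When $e^{n-2} = \top$, the phase wraps from $n-1$ back to $0$, the most recent $n$ history slots $L^{m-n+1}, \ldots, L^m$ together with $N$ form a state of the unfolded circuit $C'$ by $q^3$, and the simultaneous update of $N$ via $g'$ is exactly one transition of $W'$; the inductiveness of $Q'$ in $W'$ then lifts step-for-step to $W$. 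The intricate part is verifying that the projection of $M$ onto the window named by the $e$-pattern, together with $J^1$ (the one-step-delayed copy of $J^0$), matches exactly the primed window of a single $W'$-transition, which is precisely why the construction introduces $I^1,\ldots,I^m$, $L^1,\ldots,L^m$ and $J^1$ as delay lines.
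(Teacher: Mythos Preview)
Your plan is essentially the paper's, reorganized. Both verify the three restricted-simulation conditions directly from the construction (the paper additionally spells out why $S$ is stratified), then show initiation and consecution of $Q$. For consecution the paper case-splits first on whether \emph{all} $b^i$ are set (fully vs.\ partially initialized) and then, in the fully-initialized branch, on the $e^j$-pattern via three subcases (all $\bot$, mixed, all $\top$). You instead handle the conjuncts in thematic groups and split $q^5$ only on $e^{n-2}$; given $q^6$ this binary split is equivalent to the paper's ternary one, and your window-shift / window-wrap arguments match the paper's subcase reasoning.

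One boundary your organization leaves implicit. Your inductiveness argument for $q^5$ (``$Q'$ on the new projection is the same formula instance that held before'') presupposes that $q^5$ was already non-vacuous in the premise, i.e., that $b^m$ held. At the step where $b^m$ flips from $\bot$ to $\top$ (so $b^{m-1}$ held but $b^m$ did not) the antecedent of $q^5$ becomes live for the first time and $Q'$ must be established \emph{from scratch}: $q^7$ and $q^8$ pin the $e$-pattern so that the next-state phase is $i=n-1$, making the active window exactly the oldest $n$ history slots; $q^4$ then supplies $R(L^{m-1})$ and $S'(N)$, and together with $q^3$ that window is a reset state of $C'$, so initiation of $Q'$ in $W'$ yields $q^5$. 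The paper's outer split on $b$ surfaces this case explicitly (though it, too, dismisses the details as ``similar logic''). The same remark applies to your one-line justification of $q^0$ ``by the history invariant'' before full initialization: the history $q^3,q^4$ alone does not give $P(I^0,L^0)$; you still need to manufacture an auxiliary $W'$-state from the recorded trajectory and $S'(N)$ and invoke the inductive invariant $Q'$ there.
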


After the witness circuit has been folded, the same construction from~\cite{fmcad23} can be used to construct the backward witness. With that, the pipeline outlined in Fig.~\ref{fig:pacert} is completed. If phase abstraction is the first technique applied by the model checker, a final witness is obtained. Otherwise, further witness processing steps still need to be performed.

\begin{figure}
	\centering
	\includegraphics[width=\textwidth]{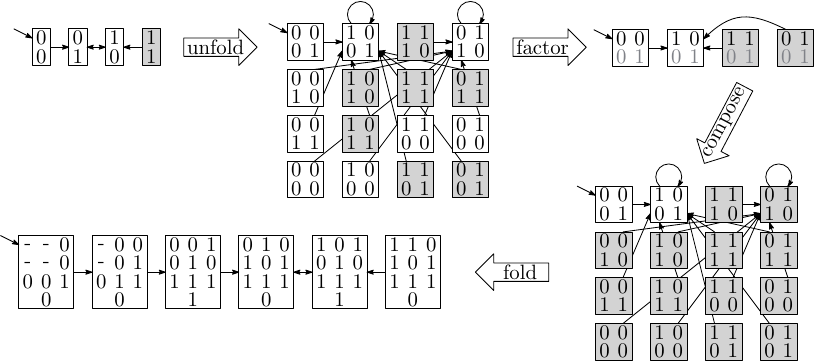}
	\caption{
A concrete example of the model checking and certification pipeline.
The original circuit has two latches; the bottom latch alternates and the top copies the previous value of this clock. The property is that at least one bit is unset. Bad states are marked gray. After unfolding with phase number two, the size of the state space is squared. Since the bottom bit is periodic, we can replace it with a constant in each phase (factor). On this circuit terminal model checking is performed, since the property is already inductive (no transition from good to bad), the circuit serves as its own witness.
To produce the final witness circuit, the clock is added back as a latch, and the property is extended with the loop invariant asserting that the clock has the correct value for each phase.
Lastly, the circuit is \emph{folded} to match the speed of the original circuit.
Three initialization bits $b^{i}$ are introduced and one additional bit $e^{0}$ that determins which pair of consecutive states need to fulfill the property ($0$ for the right pair and $1$ for the left).
This check is only part of the property once full initialization is reached.
For this final witness circuit, only the good states are depicted.
Also, the first two states represent sets of good states with the same behavior.
		\label{fig:example}
	}
\end{figure}

\section{Implementation}
\label{sec:implementation}
In this section, we present \mc, a certifying model checker implementing phase
abstraction and IC3.
We implement our own IC3 since no existing model checker supports reset functions
or produces certificates in the desired format.
We used fuzzing to increase trust in our tool. The version of \mc used for the evaluation,
was tested on over 25 million randomly generated circuits~\cite{Biere-FMV-TR-11-2} in combination
with random parameter configurations. All produced certificates where verified.

To extract periodic signals we perform ternary
simulation~\cite{DBLP:journals/fmsd/SegerB95} while using a forward-subsumption algorithm based on
a one-watch-literal data structure~\cite{onewatchscheme} to identify supersets of previously visited cubes, and
thereby a set of cube lassos.
For each cube lasso we consider every factor of the loop length $\omega$ as a phase number candidate $n$.
We also consider every duration $d$, that renders the leftover tail length ($\delta - d$) divisible by $n$.
To keep the circuit sizes manageable, we limit both $n$ and $d$ to a maximum of $8$.
We call each pair ($d$, $n$) an unfolding candidate and compute the corresponding periodic signal (Def.~\ref{def:ps}) for each latch.

For each phase,
equivalences are identified by inserting a bit string corresponding to the signs of each latch into a hash table.
After identifying the signals,
forwarding and unfolding are performed on a copy of the circuit, followed by
rudimentary rewriting.
Currently the rewriting does not include structural hashing and is mostly limited to constant propagation.
Afterwards a sequential cone-of-influence analysis starting from the property is performed.
After performing these steps for each candidate, we pick the
duration-phase pair that
yields a circuit with the fewest latches and give it to a backend model checker.

We evaluated the preprocessor on three backend model checkers:
the open-source $k$-induction-based model checker McAiger~\cite{mcaiger}(Kind in the following),
the state-of-the-art IC3 implementation in ABC~\cite{DBLP:conf/cav/BraytonM10}
and our own version of IC3 that supports reset functions
and produces certificates.
Since ABC does not support reset functions, it is not able to model check
any forwarded circuit (note that implementing this feature on ABC is also a
non-trivial
task),
therefore for this configuration we only ran phase abstraction without
forwarding thus no temporal decomposition.

Our IC3 implementation on \mc does feature cube minimization via
ternary simulation~\cite{een2011efficient}, however it is missing
proof-obligation rescheduling.
In fact, we currently use a simple stack of proof obligations as opposed to a priority queue.
Despite using one SAT solver instance per frame, we also do not feature cones-on-demand,
but instead always translate the entire circuit using
Tseitin~\cite{tseitin1983complexity}.

Lastly, we also modified the
open source implementation of Certifaiger~\cite{certifaiger} to support
certificates based on restricted simulation.
For a witness circuit $C'$ of $C$, the new certificate checker encodes the following six checks as combinatorial AIGER circuits and then uses the \texttt{aigtocnf} to translate them to SAT:
\begin{enumerate}[itemsep=0pt,label=\protect\mycirc{\Alph*}]
\item The property of $C'$ holds in all initial states.
\item The property of $C'$ implies the property for successor states.
\item The property of $C'$ holds in all good states.
\item The reset functions of common latches are equivalent. (Def.~\ref{def:ressim}.~\ref{def:ressim:reset})
\item The transition functions of common latches are equivalent. (Def.~\ref{def:ressim}.~\ref{def:ressim:trans})
\item The property of $C'$ implies the property of $C$. (Def.~\ref{def:ressim}.~\ref{def:ressim:prop})
\end{enumerate}
The first three checks are unchanged and encode the standard check for $P'$ being an inductive invariant in $C'$.
Since $P'$ is both the inductive invariant and the property we are checking, \protect\mycirc{C} can technically be omitted.
However, in our implementation, the inductiveness checker is an independent component from the simulation checker and would
also works for scenarios where the inductive invariant is a strengthening of the property in $C'$.

\begin{table}[h!]

	\begin{tabular}{@{}lrrrrc@{\hspace{-3pt}}r@{\hspace{-1pt}}rc@{\hspace{-1pt}}rrrc@{\hspace{-1pt}}rrr@{}}
		\toprule
		                        & \multicolumn{4}{c}{Model} & \phantom{x}              &
		\multicolumn{2}{c}{ABC} & \phantom{x}               & \multicolumn{3}{c}{Our
		IC3}                    & \phantom{x}               & \multicolumn{3}{c}{Kind}                                                                                      \\
		\cmidrule{2-5} \cmidrule{7-8} \cmidrule{10-12} \cmidrule{14-16}
		                        & {\hspace{-3cm}Safe}       & {$\overline{n}$}         & {$d$} & {$n$} & {} & {} & {PA} & {} & {} & {PA} & {Full} & {} & {} & {PA} & {Full}
		\\
		\midrule
		Solved &  &  &  &  &  & 740 & \bt 745 &  & 715 & 715 & 604 &  & 533 & 538 & 544 \\
PAR2 &  &  &  &  &  & 996 & \bt 941 &  & 1357 & 1351 & 2699 &  & 3533 & 3472 & 3399 \\
\midrule
abp4p2ff & \checkmark & 1 & 1 & 1 &  & 1.12 & \bt 1.08 &  & 6.35 & 6.23 & 6.18 &  & 2.50 & 2.50 &  \\
bjrb07 &  & 3 & 0 & 3 &  & 0.12 & 0.10 &  & 0.11 & \bt 0.03 & 0.07 &  &  & \bt 0.03 & 0.04 \\
nusmvb5p2 &  & 5 & 0 & 5 &  & 0.12 & 0.10 &  & \bt 0.01 & \bt 0.01 & \bt 0.01 &  &  & \bt 0.01 & \bt 0.01 \\
nusmvb10p2 &  & 5 & 0 & 5 &  & 0.22 & 0.13 &  & 0.10 & 0.03 & 0.04 &  &  & \bt 0.02 & \bt 0.02 \\
prodcell0 & \checkmark & 1 & 5 & 8 &  & 26.97 & 27.07 &  & 228.46 & 243.73 & 49.76 &  &  &  & \bt 2.37 \\
prodcell0neg & \checkmark & 1 & 5 & 8 &  & 16.36 & 15.93 &  & 230.57 & 230.67 & 36.62 &  &  &  & \bt 2.39 \\
prodcell1 & \checkmark & 1 & 7 & 8 &  & 23.45 & 23.38 &  & 654.21 & 665.86 & 59.67 &  &  &  & \bt 4.43 \\
prodcell1neg & \checkmark & 1 & 7 & 8 &  & 28.36 & 28.33 &  & 681.11 & 738.61 & 61.74 &  &  &  & \bt 4.48 \\
prodcell2 & \checkmark & 1 & 7 & 8 &  & 24.98 & 24.58 &  & 661.71 & 663.37 & 56.74 &  &  &  & \bt 4.43 \\
prodcell2neg & \checkmark & 1 & 7 & 8 &  & 20.23 & 20.28 &  & 778.39 & 768.75 & 56.14 &  &  &  & \bt 4.47 \\
bc57sen0neg & \checkmark & 1 & 1 & 1 &  & 503.61 & \bt 494.55 &  & 910.72 & 906.87 & 1760.41 &  &  &  & 830.92 \\
abp4ptimo & \checkmark & 1 & 1 & 1 &  & 4.14 & \bt 4.13 &  & 28.93 & 29.91 & 6.32 &  &  &  & 608.55 \\
boblivea &  & 1 & 2 & 1 &  & 3.70 & \bt 3.68 &  &  &  & 7.85 &  &  &  &  \\
bobsm5378d2 &  & 1 & 8 & 1 &  & \bt 4.04 & 4.12 &  &  &  & 88.36 &  &  &  &  \\
bobsmnut1 &  & 8 & 5 & 8 &  & \bt 10.95 & 40.08 &  &  &  & 2504.07 &  &  &  &  \\
prodcell3neg & \checkmark & 2 & 2 & 8 &  & 27.88 & 10.86 &  & 310.22 &  &  &  &  & 837.43 & \bt 2.73 \\
prodcell4neg & \checkmark & 2 & 2 & 8 &  & 44.31 & 9.90 &  & 404.12 &  &  &  &  & 26.04 & \bt 2.77 \\
prodcell3 & \checkmark & 2 & 2 & 8 &  & 23.45 & 11.24 &  & 320.23 &  &  &  & 1103.29 & 19.22 & \bt 2.48 \\
prodcell4 & \checkmark & 2 & 2 & 8 &  & 31.40 & 10.08 &  & 398.83 &  &  &  & 29.71 & 18.67 & \bt 2.68 \\
pdtvisvsar29 &  & 1 & 2 & 5 &  &  &  &  &  &  & 1523.73 &  & 0.36 & \bt 0.29 & 0.40 \\
intel042 & \checkmark & 1 & 3 & 2 &  &  &  &  &  &  &  &  & \bt 3876.04 & 4061.38 &  \\
intel022 &  & 2 & 2 & 2 &  &  & \bt 1852.29 &  &  &  &  &  &  &  &  \\
intel021 &  & 2 & 2 & 2 &  &  & 2752.86 &  &  & \bt 651.56 &  &  &  &  &  \\
intel023 &  & 2 & 2 & 2 &  &  & \bt 2257.94 &  &  & 3728.38 &  &  &  &  &  \\
intel029 &  & 2 & 2 & 2 &  &  & \bt 2550.14 &  &  & 3437.64 &  &  &  &  &  \\
intel024 &  & 2 & 2 & 2 &  &  & \bt 167.96 &  & 676.64 & 4526.60 &  &  &  &  &  \\
intel019 &  & 2 & 2 & 2 &  &  &  &  &  & \bt 2716.40 &  &  &  &  &  \\
\bottomrule

	\end{tabular}
	\caption{
		We presents the effect of preprocessing in combination with different
		backend engines on model checking time.
		We compare no preprocessing to only phase abstraction without forwarding
		(PA) and full preprocessing (Full).
		Note that, ABC does not support reset functions and can therefore not be
		combined with full preprocessing.
		For each model we present the phase number without forwarding
		$\overline{n}$ for PA
		and the duration $d$ and phase number $n$ corresponding to Full.
		Models where the property holds are marked as safe.
		The first two rows present the number of solved instances and the PAR2
		score~\cite{froleyks2021sat} over all 818 benchmarks.
		The table shows all instances where preprocessing had either a positive or
		negative impact on model checking success,
		with the exception of those instances rendered unsolvable for our IC3
		implementation by forwarding.
	}
	\label{tab:mc}
\end{table}
\section{Experimental Evaluation}

\begin{figure}[t]
	\centering
	\includegraphics[width=0.6\textwidth]{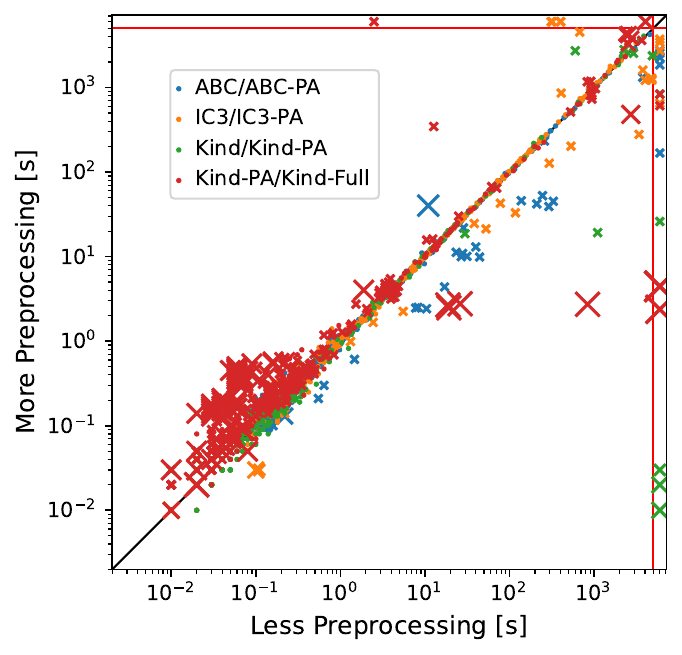}
	\caption{
		Comparison of model checking performance.
		We compare four pairs of configurations; the three backend engines with
		and without phase abstraction (with fixed duration 0)
		and for Kind we present the effect of additionally allowing
		forwarding.
		The size of the markers represents $n+d$.
		The dots represent instances where the preprocessing heuristic decided not
		to alter the circuit.
		The red lines mark the timeout of 5000 seconds.
		Markers beyond that line represent instances solved by one configuration but not the other.
	}
	\label{fig:mc}
\end{figure}

\begin{figure}[t]
	\centering
	\includegraphics[width=0.6\textwidth]{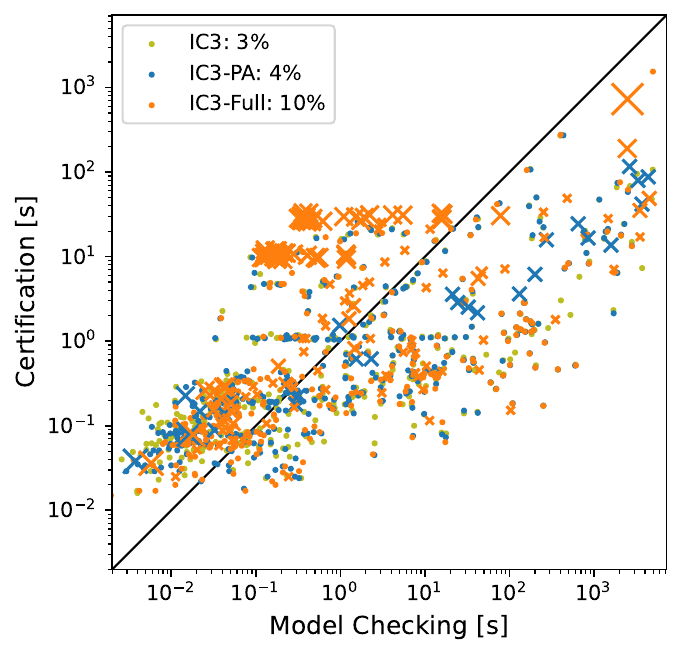}
	\caption{Certification vs. model checking time for three configurations of our
		IC3 engine.
		The legend shows the cumulative overhead of including certification for all solved instances.
		The size of the markers represents $n+d$.
		The dots represent instances where preprocessing did not
		alter the circuit.
	}
	\label{fig:cert}
\end{figure}
\label{sec:exp}
This section presents experimental results for evaluating the impact of
preprocessing on
the different backends, as well as the effectiveness of our proposed certification
approach. The experiments were run in parallel on a cluster
 of 32
 nodes. Each node was equipped
 with two 8-core Intel Xeon E5-2620 v4 CPUs running at 2.10 GHz and
 128 GB of main memory.
We allocated 8 instances to each node, with a
timeout of 5000 seconds for model checking and 10\,000 seconds for certificate checking.
Memory is limit to 8 GB per instance in both cases.

The benchmarks are obtained from
HWMCC2010~\cite{hwmcc10} which contains a good number of
industrial problems. As we observe from the experiments in
general, prepossessing is usually fast. Ignoring one outlier in our benchmark
set, it completes within an average of $0.07$ seconds and evaluates no more
than $17$ unfolding candidates per benchmark.
Interestingly, for the outlier ``bobsmnut1'', $3019$ unfoldings are
computed for $179$ different cube lassos within $34$ seconds.

Table~\ref{tab:mc} presents the effect of our preprocessing
on different backends, further illustrated in Fig~\ref{fig:mc}.
Our preprocessor was able to improve the performance of the sophisticated
IC3/PDR
implementation
in ABC, allowing us to solve five more instances, all from the intel family.
For each benchmark from this family, our heuristic computed an optimal phase
number of 2. A likely explanation for this is that the real-world industrial
designs tend to contain strict two-phase clocks~\cite{baumgartner1999model}.
The positive effect of phase abstraction is also clear in combination with the
$k$-induction (Kind) backend.
Circuit forwarding provides a further improvement, that is especially
notable on the prodcell benchmarks.
These also illustrate how forwarding enables more successful unfolding.
Without forwarding, preprocessing only unfolds 61 out of the 818 benchmarks with an average phase number of $2$,
with forwarding 152 circuits are unfolded with an average of $4$.

Even though our prototype implementation of IC3 is missing a number of
important features present in ABC, it still solves a large number of benchmarks.
However, as opposed to ABC it does lose a number of benchmarks with phase abstraction.
This can be explained by the lack of sophisticated rewriting that can exploit the unfolded circuits structure.
The addition of forwarding is highly detrimental to performance, losing 115 instances.
This is due to our implementation following the PDR design outlined in \cite{een2011efficient}.
It requires any blocked cube not to intersect the initial states after generalization.
If only a single reset state exists this check is linear in the size of the cube.
However, in the presence of reset functions it is implemented with a SAT call.
While also slower the main problem however is that the reset-intersection check is also more likely
to block generalization.
On the 115 lost benchmarks generalization failed 96\% of the time,
while it only failed in 1.8\% of the cases without forwarding.
We keep the optimization of our IC3 implementation in the presence of reset functions for future work.

Fig.~\ref{fig:cert} displays certification results on \mc in comparison to model
checking time. IC3 provides certificates that are easily verifiable,
as confirmed by our experiments with cumulative overhead of only 3\%.
The addition of phase abstraction (i.e., including constructing
$n$-folded witnesses as in Fig.~\ref{fig:pacert}, without witness back-warding)
does not bring significant additional overhead.
When forwarding is allowed, the certification overhead increases to 10\%.
The run time of certificates generation and encoding to SAT is negligible for all configurations.
The certification time is dominated by the SAT solving time for the transition (Def.~\ref{def:ressim}.2)
and consecution check.
Overall,
this is a significant improvement over related work from~\cite{fmcad23}
which reported
1154\%  overhead on the same
set of benchmarks using a $k$-induction engine as the backend.

\section{Conclusion}
In this paper, we present a certificate format that can be effectively validated
by
an independent certificate checker. We demonstrate its versatility by applying
it to an extended version of phase abstraction, which we introduce as one of the
contributions of this paper. We have implemented the proposed
approach
on a new certifying model checker \mc. The experimental results on HWMCC
instances
show
that our approach is effective and yields very small certification overhead, as a
vast improvement over related work.
Our certificate format allows for smaller certificates and is designed to be possibly used
in hardware model checking competitions as a standardized format.

Beyond increasing trust in model checking, certificates can be utilized in many other scenarios. For instance, 
such certificates will allow the use of model checkers as additional {hammers} in interactive theorem provers such as Isabelle~\cite{nipkow2002isabelle} via Sledgehammer~\cite{paulson2023sledgehammer}, with the potential of significantly reducing the effort
needed for using theorem provers in domains where model checking is essential, such as formal hardware verification,
our main application of interest.
Currently in Isabelle, Sledgehammer allows to encode the current goal for automatic theorem provers or SMT solvers and then call one of many tools to solve the problem.
The tool then provides a certificate which is lifted to a proof that can be replayed in Isabelle. We plan to add our model checker as an additional hammer to increase the automatic proof capability of Isabelle.
This further motivates us to investigate
certificate trimming via SAT proofs.

\FloatBarrier{}

	\subsubsection*{Acknowledgements}
	This work is supported by the Austrian Science Fund (FWF) under the 
       project W1255-N23, the LIT AI Lab funded by the State of Upper
	Austria, the ERC-2020-AdG
	101020093, the Academy of Finland under the project 336092 and by a gift from Intel Corporation.

\newpage
\bibliographystyle{splncs04}
\bibliography{paper}

\newpage
\section{Appendix}
This Appendix first provides two observations relating witness circuits under the new restricted simulation relation to the stratified simulation relation introduced in previous work~\cite{biere2022stratified}.
The next section provides formal proofs for the correctness of Theorem~\ref{thm:safety} and thereby the formal basis of our certification approach based on restricted simulation.
The final section formally proves the completeness of the witness circuit construction for phase abstraction.

\subsection{Comparison: Stratified and Restricted Simulation}
Previous work~\cite{biere2022stratified} introduced a more restricted form of simulation relation.
The following two propositions show that the new format subsumes this work,
and all previously introduced techniques for witness circuit construction can still be applied.

\begin{proposition}~\label{prop:strat}
	Given two stratified circuits $C=(I,L,R,F,P)$ and $C'=(I',L',R',F',P')$. If $C'$
	simulates $C$ under the
	stratified simulation relation~\cite{fmcad23}, then $C'$ also simulates $C$ under the
	restricted
	simulation.
\end{proposition}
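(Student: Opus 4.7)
The plan is to prove Proposition~\ref{prop:strat} by a direct unpacking of the stratified simulation definition from~\cite{biere2022stratified,fmcad23} and then checking each of the three clauses of Def.~\ref{def:ressim} in turn. Since stratified simulation insists on $I = I'$ and $L \subseteq L'$, the common-variable sets appearing in restricted simulation collapse to the simpler forms $I \cap I' = I = I'$ and $L \cap L' = L$. This identification is what allows the stratified conditions, which already quantify over all of $L$ (respectively all of $I$), to be read off as the more general restricted conditions.

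First I would handle the reset and transition clauses Def.~\ref{def:ressim}.\ref{def:ressim:reset}--\ref{def:ressim:trans}. The stratified simulation relation requires latch-wise agreement of the reset and transition functions on the shared latches $L$: for every $l \in L$, $r_l(I,L) \equiv r'_l(I',L')$ and $f_l(I,L) \equiv f'_l(I',L')$ (after identifying $I$ with $I'$). Because $L \cap L' = L$, these are exactly the equivalences demanded for $l \in L \cap L'$ by restricted simulation. Any extra latches in $L' \setminus L$ are harmless since Def.~\ref{def:ressim} quantifies only over the intersection.

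For Def.~\ref{def:ressim}.\ref{def:ressim:prop}, the stratified simulation relation already includes a property-implication requirement of the form $P'(I',L') \Rightarrow P(I,L)$, so this clause is immediate. Combining the three steps, every stratified pair $(C,C')$ also satisfies the restricted-simulation conditions, which is the desired conclusion.

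The main obstacle I anticipate is purely notational rather than mathematical: I need to be careful that the precise stratified formulation cited in~\cite{biere2022stratified,fmcad23} phrases the reset-function requirement in a form (semantic latch-wise equivalence, or an equivalence of the conjoined reset predicate $R(L) \equiv R'(L)$) that directly yields the bit-wise semantic equivalence $r_l \equiv r'_l$ demanded by Def.~\ref{def:ressim}.\ref{def:ressim:reset}. If the cited form is the conjoined version, a brief appeal to the stratified (acyclic) structure of $R'$, together with the fact that each $r_l$ is uniquely determined by the reset predicate, closes the gap; otherwise the latch-wise form is already what we need and the proof is complete in one line per clause.
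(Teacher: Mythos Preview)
Your approach is correct, but there is nothing to compare against: the paper states Proposition~\ref{prop:strat} in the Appendix as an observation without giving any proof. Your direct argument---using $I=I'$ and $L\subseteq L'$ from the stratified relation to collapse $L\cap L'$ to $L$, then reading off each clause of Def.~\ref{def:ressim}---is exactly the intended one-line-per-clause verification, and your caveat about the possible conjoined-versus-latchwise formulation of the reset condition in~\cite{biere2022stratified,fmcad23} is appropriately cautious.
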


\begin{proposition}~\label{prop:kwit}
	Given a circuit $C=(I,L,R,F,P)$ and its stratified $k$-witness circuit
	$C'=(I',L',R',F',P')$ as defined in~\cite{biere2022stratified}.
	Then $C'$ is a witness
	circuit of $C$ according to Definition \ref{def:wit}.
\end{proposition}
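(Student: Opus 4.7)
The plan is to reduce Proposition~\ref{prop:kwit} directly to Proposition~\ref{prop:strat} together with the defining properties of a stratified $k$-witness circuit from~\cite{biere2022stratified}. Unpacking that definition, $C'$ comes equipped with two guarantees by construction: (a) $C'$ simulates $C$ under the stratified simulation relation of~\cite{biere2022stratified}, and (b) the property $P'$ of $C'$ is an inductive invariant in $C'$ (this is exactly what a $k$-witness is built to provide; the $k$-induction unrolling is already folded into the reset and transition functions of $C'$).

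First, I would invoke Proposition~\ref{prop:strat} to lift (a) from stratified simulation to restricted simulation, thereby discharging the first bullet of Definition~\ref{def:wit}. Concretely this means appealing to the prior proposition as a black box, rather than redoing the three SAT-style checks of Definition~\ref{def:ressim}: the reset and transition conditions on the common latches $L \cap L' = L$ follow from the bit-wise equivalences required by stratified simulation, and the property implication is carried over verbatim.

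Next, I would observe that (b) matches the second bullet of Definition~\ref{def:wit} word-for-word, so no further argument is needed for inductiveness. Combining the two observations, $C'$ satisfies both clauses of Definition~\ref{def:wit} and is therefore a witness circuit of $C$ in the refined sense, yielding the proposition.

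There is really no technical obstacle here; the statement is a backward-compatibility lemma whose purpose is to ensure that all existing witness constructions in~\cite{biere2022stratified,fmcad23} are already admissible under the new, more permissive Definition~\ref{def:wit}. The only content-bearing step is Proposition~\ref{prop:strat}, and that has been discharged separately. What makes the proof short is precisely the design choice that restricted simulation weakens stratified simulation by only requiring agreement on the intersection $L \cap L'$ rather than on a subset inclusion, so every stratified witness automatically qualifies.
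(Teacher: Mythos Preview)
Your proposal is correct, and in fact the paper does not supply a proof of Proposition~\ref{prop:kwit} at all: it is stated in the Appendix as one of two ``observations'' intended to make clear that the new restricted-simulation format subsumes the earlier stratified format, and is left without argument. Your reduction via Proposition~\ref{prop:strat} together with the defining property of a stratified $k$-witness (that $P'$ is already an inductive invariant in $C'$) is exactly the intended argument and discharges both bullets of Definition~\ref{def:wit} immediately.
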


\subsection{Correctness: Restricted-Simulation-Based Witnesses}
The main claim for the correctness of our certification approach is given in Theorem~\ref{thm:safety}.
The formal proof of this theorem is split over the following lemmas relating to the reset, transition and property aspects of Def.~\ref{def:ressim}.
The first of these lemmas is presented in the main paper, and restated here for completeness.

\begin{replemma}
	Let $C=(I,L,R,F,P)$ and $C'=(I',L',R',F',P')$
	be two stratified circuits
	satisfying the reset condition defined in
	Def.~\ref{def:ressim}.\ref{def:ressim:reset}. Then
	$R'(L\cap L')$ is semantically dependent only on their common variables.
\end{replemma}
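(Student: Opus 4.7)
The plan is to argue componentwise on the conjunction. Recall that $R'(L\cap L')$ is shorthand for $\bigwedge_{l\in L\cap L'}(l\simeq r'_l(I',L'))$, and that semantic independence is preserved by conjunction: a conjunction is independent of a variable $v$ provided each conjunct is. Hence it suffices to show, for every $l\in L\cap L'$, that the biconditional $(l\simeq r'_l(I',L'))$ is semantically independent of all variables outside $(I\cap I')\cup (L\cap L')$. The left side $l$ is, by hypothesis, already a common latch, so the real work is to prove the same statement for the formula $r'_l$ alone.

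To establish that, I would use the reset condition $r_l(I,L)\equiv r'_l(I',L')$ as a symmetry argument about syntactic support. On one hand, $r_l$ is syntactically a Boolean formula over $I\cup L$; an elementary fact about propositional logic gives that its semantic support is contained in its syntactic support, so $r_l$ is semantically independent of every variable outside $I\cup L$. Because $r_l\equiv r'_l$, the very same semantic independence must hold for $r'_l$. Symmetrically, $r'_l$ is syntactic over $I'\cup L'$ and hence semantically independent of every variable outside $I'\cup L'$. Combining the two constraints, $r'_l$ depends only on variables in $(I\cup L)\cap(I'\cup L')$, which (using that inputs and latches are disjoint kinds of variables globally) is exactly $(I\cap I')\cup(L\cap L')$.

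Putting the pieces together, each conjunct of $R'(L\cap L')$ is semantically independent of every variable not in $(I\cap I')\cup (L\cap L')$, so the conjunction is too; this is precisely the semantic dependence claim the lemma asserts. I expect the only delicate point to be the implicit lemma that the syntactic support of a Boolean formula upper-bounds its semantic support, since the reset-condition equivalence is a statement over formulas with different declared variable sets. Making that upper-bound fact explicit on both sides of the equivalence is what lets us convert a one-sided syntactic hypothesis into a two-sided semantic restriction; beyond that, no inductive or stratification argument is needed, because we are talking only about the common-latch fragment of $R'$ rather than its full satisfiability.
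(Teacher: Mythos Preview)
Your proof is correct and takes essentially the same approach as the paper: both arguments exploit that $r_l\equiv r'_l$ together with the syntactic supports $I\cup L$ and $I'\cup L'$ forces the semantic support of each $r'_l$ with $l\in L\cap L'$ into the intersection $(I\cap I')\cup(L\cap L')$. The paper frames this as a proof by contradiction on a single non-common latch variable (then remarks that inputs are handled identically), whereas you give the same reasoning directly and componentwise, which is arguably cleaner but not a different idea.
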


\begin{proof}
	We provide a proof by contradiction. Given a latch $l_a\in (L'\setminus L)$
	and a latch $l_b\in (L\cap L')$,
	suppose that  $r'_{l_b}(I',L')|_{l_a} \not\equiv r'_{l_b}(I',L')|_{\neg l_a}$
	which entails that $r'_{l_b}(I',L')$ is dependent on $l_a$.
	Since $l_a\in (L'\setminus L)$,  we have
	$l_a\notin L$. Based on this, we have $r_{l_b}(I,L)|_{l_a}\equiv
	r_{l_b}(I,L)|_{\neg l_a}$ since $l_a$ is not a variable in $C$. Let $s$ be a
	satisfying assignment to $r_{l_b}(I,L)|_{l_a}$ as well as $r_{l_b}(I,L)|_{\neg
		l_a}$. By the reset condition $r_{l_b}(I,L)\equiv r'_{l_b}(I',L')$, the same
	assignment $s$ satisfies $r'_{l_b}(I',L')|_{l_a}$ and $r'_{l_b}(I',L')|_{\neg
		l_a}$. Then we have $r'_{l_b}(I',L')|_{l_a}\equiv r'_{l_b}(I',L')|_{\neg l_a}$,
	which
	contradicts the fact that $r'_{l_b}(I',L')|_{l_a}\not\equiv
	r'_{l_b}(I',L')|_{\neg l_a}$.
	The same argument can be applied to the inputs.
	Therefore, every $r'_l(I',L')$ for $l\in
	(L\cap L')$ is dependent only on common variables $(L\cap L')\cup(I\cap I')$. Since
	$R'(L\cap L')$ is defined as $\bigwedge\limits_{l\in (L\cap L')} l\simeq
	r'_l(I',L')$,
	the same follows.
	\qed
\end{proof}

\begin{lemma}\label{lem:resext}
	Given two stratified circuits $C=(I,L,R,F,P)$ and $C'=(I',L',R',\\F',P')$ that
	satisfy Def.~\ref{def:ressim}.\ref{def:ressim:reset}.
	For every reset state in $C$, there is a
	reset state in $C'$ with the same assignments to $L\cap L'$.
\end{lemma}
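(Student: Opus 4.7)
The plan is to start from an arbitrary reset state $s$ of $C$, i.e.\ an assignment to $I \cup L$ satisfying $R(L)=\bigwedge_{l\in L}(l\simeq r_l(I,L))$, and to explicitly construct an extension $s'$ on $I'\cup L'$ that satisfies $R'(L')$ and agrees with $s$ on $L\cap L'$ (and on $I\cap I'$).

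First I would set $s'$ to agree with $s$ on the common variables $(I\cap I')\cup(L\cap L')$ and pick arbitrary values on $I'\setminus I$. The remaining task is to assign the latches in $L'\setminus L$ so that all of $R'(L')$ is satisfied. Here the hypothesis that $C'$ is stratified (acyclic reset functions) is crucial: it induces a topological order on $L'$ in which every $r'_l$ depends only on latches earlier in the order and on inputs. I would process the latches of $L'$ in such an order and for each $l\in L'\setminus L$ define $s'(l)$ to be the value of $r'_l(I',L')$ under the part of $s'$ already fixed. This guarantees $l\simeq r'_l(I',L')$ for every $l\in L'\setminus L$ by construction.

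The remaining, and main, obstacle is to verify that the conjuncts for $l\in L\cap L'$ are also satisfied, since we are \emph{copying} the values of these latches from $s$ rather than computing them from $r'_l$. Here I would invoke Lemma~\ref{lem:konly}: for $l\in L\cap L'$, the formula $r'_l(I',L')$ is semantically dependent only on variables in $(I\cap I')\cup(L\cap L')$, so its truth value under $s'$ is already determined by the common-variable assignments inherited from $s$. Combining this with the reset-equivalence hypothesis Def.~\ref{def:ressim}.\ref{def:ressim:reset}, namely $r_l(I,L)\equiv r'_l(I',L')$, gives that on the common variables $r_l$ and $r'_l$ evaluate identically. Since $s$ satisfies $l\simeq r_l(I,L)$, we obtain $s'(l)=s(l)=r_l(I,L)[s]=r'_l(I',L')[s']$, so the conjunct $l\simeq r'_l(I',L')$ holds. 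A subtle point I would make explicit is that the order of processing is compatible: by Lemma~\ref{lem:konly} no $r'_l$ with $l\in L\cap L'$ ever depends on a latch in $L'\setminus L$, so the copied values need never be revised as we continue down the topological order.

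Collecting everything, $s'$ satisfies all conjuncts of $R'(L')$ and agrees with $s$ on $L\cap L'$, which is the claim. \qed
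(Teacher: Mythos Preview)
Your proposal is correct and follows essentially the same approach as the paper: copy the common-latch values from $s$, use Lemma~\ref{lem:konly} together with Def.~\ref{def:ressim}.\ref{def:ressim:reset} to argue these copied values already satisfy the $R'$-equations for $L\cap L'$, and then fill in $L'\setminus L$ along a topological order of the stratified reset dependencies. Your write-up is in fact more explicit than the paper's on two points it glosses over---assigning $I'\setminus I$ arbitrarily, and the observation that semantic (not syntactic) independence is what Lemma~\ref{lem:konly} gives, which is still enough because the copied values remain consistent regardless of how $L'\setminus L$ is later assigned.
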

\begin{proof}
	Let $s$ be a satisfying assignment to $R(L)$. Under the reset condition
	$r_l(I,L)\equiv r'_l(I',L')$ for $l\in (L\cap L')$,  the same
	assignment satisfies $R'(L\cap L')$. Since the reset functions of
	$L\cap L'$ can be assumed to not have
	dependencies on other latches $L'\setminus L$ according to
	Lemma~\ref{lem:konly}, and
	the rest of $R'(L')$ (i.e., $R'(L'\setminus L)$) is also stratified, we
	can assume a
	topological ordering of the dependency graph where all latches in
	$L'\setminus
	L$ come before $L\cap L'$. We can therefore assign values to the
	rest of the latches $L'\setminus L$ by traversing in reverse order.
	\qed
\end{proof}

That is a reset state in $C$ can always be extended to a reset state in
$C'$ and similar arguments can be applied to transitions from current to next state:

\begin{lemma}\label{lem:konly_f}
	Given two stratified circuits $C=(I,L,R,F,P)$ and $C'=(I',L',R',\\F',P')$ that
	satisfy the transition condition
	(Def.~\ref{def:ressim}).
	Then $f'_l(I',L')$ for all $l\in (L\cap L')$ semantically only depend
	on latches in $L\cap L'$.
\end{lemma}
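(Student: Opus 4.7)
The plan is to mirror the proof of Lemma~\ref{lem:konly} almost verbatim, replacing reset functions by transition functions throughout. The only property of $R$ that was used was the equivalence $r_l(I,L) \equiv r'_l(I',L')$ on common latches, together with the purely syntactic fact that $r_l(I,L)$ mentions no variables outside $I \cup L$. Both ingredients have direct analogues for $F$: by Def.~\ref{def:ressim}.\ref{def:ressim:trans} we have $f_l(I,L) \equiv f'_l(I',L')$ for every $l \in L \cap L'$, and $f_l(I,L)$ is by construction a formula over $I \cup L$ only.

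Concretely, I would proceed by contradiction. Suppose some $l_b \in L \cap L'$ has a transition function $f'_{l_b}(I',L')$ that is semantically dependent on a variable $v \in (L' \setminus L) \cup (I' \setminus I)$, i.e., $f'_{l_b}(I',L')|_{v} \not\equiv f'_{l_b}(I',L')|_{\neg v}$. Since $v$ does not occur in $I \cup L$, substituting $v$ in $f_{l_b}(I,L)$ has no effect, so $f_{l_b}(I,L)|_{v} \equiv f_{l_b}(I,L) \equiv f_{l_b}(I,L)|_{\neg v}$. Substitution of a literal in semantically equivalent formulas preserves equivalence, so applying $|_{v}$ and $|_{\neg v}$ to both sides of $f_{l_b}(I,L) \equiv f'_{l_b}(I',L')$ yields
\[
f'_{l_b}(I',L')|_{v} \equiv f_{l_b}(I,L)|_{v} \equiv f_{l_b}(I,L)|_{\neg v} \equiv f'_{l_b}(I',L')|_{\neg v},
\]
contradicting the assumed semantic dependency. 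The same argument, carried out for each $l_b \in L \cap L'$ and each candidate variable $v$ outside $(L \cap L') \cup (I \cap I')$, gives the conclusion.

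I expect no real obstacles here; the argument is formally parallel to Lemma~\ref{lem:konly}. The only point to watch is that the lemma as stated mentions ``latches in $L \cap L'$'', whereas the same reasoning simultaneously rules out dependence on inputs in $I' \setminus I$. As in the reset case (where the author closes with ``the same argument can be applied to the inputs''), I would cover both latches and inputs in a single uniform case split on $v$, so that $f'_l(I',L')$ ends up depending only on the common variables $(L \cap L') \cup (I \cap I')$. This is exactly what is needed downstream to lift a transition in $C$ to a matching transition in $C'$ in the proof of Theorem~\ref{thm:safety}.
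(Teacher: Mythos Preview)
Your proposal is correct and matches the paper's own proof, which simply states that ``the proof follows the same logic as that of Lemma~\ref{lem:konly}.'' You have spelled out that parallel argument explicitly (and arguably a bit more cleanly via the substitution-preserves-equivalence step), so nothing further is needed.
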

\begin{proof}
	The proof  follows the same logic as that of
	Lemma~\ref{lem:konly}.\qed
\end{proof}
\begin{lemma}\label{lem:fext}
	Given two stratified circuits $C=(I,L,R,F,P)$ and $C'=(I',L',R',\\F',P')$ that
	satisfy the transition condition
	(Def.~\ref{def:ressim}.\ref{def:ressim:trans}). Let $u$ be a state in $C$ and
	$u'$ a state in $C'$ such that the common variables are assigned to the
	same values in both $u$ and $u'$.  Then for every successor of $u$, there is a
	successor state of $u'$ with
	the same
	assignments to $L\cap L'$.
\end{lemma}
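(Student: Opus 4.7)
The plan is to mirror Lemma~\ref{lem:resext} in a one-step setting. The argument is in fact simpler, because evaluating the transition functions does not require a topological traversal of a dependency graph: given any assignment to the current latches and the current inputs, every $f_l$ deterministically produces a single next latch value, so no recursion needs to be unwound.

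First I would fix a successor $v$ of $u$ in $C$. By the semantics of the transition $u \to v$, the latch part of $v$ is determined by $v(l) = f_l(u|_I, u|_L)$ for every $l \in L$, while $v|_I$ is a free choice of inputs for the next step. Next I would construct a candidate successor $v'$ of $u'$ in $C'$ by setting $v'(l) = f'_l(u'|_{I'}, u'|_{L'})$ for every $l \in L'$ and picking $v'|_{I'}$ arbitrarily. The remaining obligation is then purely to verify $v(l) = v'(l)$ for every $l \in L \cap L'$; nothing needs to be said about latches outside the common set.

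The verification step is where Lemma~\ref{lem:konly_f} does the work. The transition condition (Def.~\ref{def:ressim}.\ref{def:ressim:trans}) gives $f_l(I,L) \equiv f'_l(I',L')$ for every $l \in L \cap L'$, and Lemma~\ref{lem:konly_f}, together with its symmetric counterpart obtained by the same case analysis applied to $f_l$, ensures that both sides are semantically independent of every variable outside $(I \cap I') \cup (L \cap L')$. Since $u$ and $u'$ agree on all common variables by hypothesis, the two functions evaluate to the same Boolean value, giving $v(l) = v'(l)$. I do not anticipate a genuine obstacle: the only subtlety is that the non-common input components $u|_{I \setminus I'}$ and $u'|_{I' \setminus I}$ could in principle perturb the evaluation, but semantic independence from those variables rules this out, so the argument closes immediately.
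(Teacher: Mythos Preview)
Your proposal is correct and follows essentially the same approach as the paper's proof: both evaluate the transition functions on the given pair of states, invoke the transition condition $f_l \equiv f'_l$ together with Lemma~\ref{lem:konly_f} to conclude that the common next-state latches coincide, and observe that the remaining latches in $L' \setminus L$ are determined directly by $f'_l(u')$. Your presentation is somewhat more explicit (in particular you spell out the symmetric independence of $f_l$, which the paper leaves implicit), but the structure and key lemma are the same.
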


\begin{proof}
	First of all, let $s$ be a satisfying assignment to $U_1$ (i.e., $L_1\simeq
	F(I_0,L_0)$). We construct a satisfying assignment $s'$ to
	$U'_1$ by first
	applying the same variable assignment to $L_0\cap L'_0$ in $C'$ (as well as
	$I_0\cap I'_0$) and extend
	it
	to an
	assignment of $I'_0 \cup L_0'$. Since $l_1\simeq f_{l}(I_0,L_0)$ for all
	$l\in L_0$, and the transition condition holds, by
	Def.~\ref{def:ressim:trans},
	$l_1\simeq f'_{l}(I'_0,L'_0)$ for all $l\in (L\cap L')$. Therefore, together
	with Lemma~\ref{lem:konly_f}, we can
	keep the same variable assignment to $L_1\cap L'_1$ in $s'$. The rest of the
	variables
	$L'_1\setminus L_1$ have their transition functions solely depend on
	variables from $I'_0$ and $L'_0$ therefore $s'$ is guaranteed to be
	satisfiable.


	\qed
\end{proof}

\begin{reptheorem}
	Let $C=(I,L,R,F,P)$ and
	$C'=(I',L',R',F',P')$ be two stratified circuits, where $C'$ simulates $C$
	under restricted simulation. \\If
	$C'$ is
	safe, then $C$ is also safe.
\end{reptheorem}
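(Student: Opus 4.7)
The plan is to prove the contrapositive: assuming $C$ is unsafe, we build a counterexample in $C'$. Any counterexample in $C$ takes the form of a finite trace $s_0, s_1, \ldots, s_k$ of concrete states (each an assignment to $I \cup L$) with $s_0 \models R(L)$, each consecutive pair satisfying $L_{i+1} \simeq F(I_i, L_i)$, and $s_k \models \neg P(I,L)$. The goal is to construct a parallel trace $s'_0, \ldots, s'_k$ in $C'$ whose projection onto the common variables $(I \cap I') \cup (L \cap L')$ agrees with that of the original trace, and then to derive $s'_k \models \neg P'$ from Def.~\ref{def:ressim}.\ref{def:ressim:prop}.

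For the base case, Lemma~\ref{lem:resext} yields a reset state $s'_0 \models R'(L')$ agreeing with $s_0$ on $L \cap L'$. Common inputs can also be made to coincide with those of $s_0$ since, by Lemma~\ref{lem:konly}, $R'(L \cap L')$ depends only on common variables, while private inputs in $I' \setminus I$ may be assigned arbitrarily. For the inductive step, given $s'_i$ matching $s_i$ on all common variables, Lemma~\ref{lem:fext} produces a successor $s'_{i+1}$ in $C'$ that coincides with $s_{i+1}$ on $L \cap L'$; the common input values of $s_{i+1}$ are copied into $s'_{i+1}$, which is consistent because Lemma~\ref{lem:konly_f} ensures that the transition functions for latches in $L \cap L'$ do not depend on any private variables. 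Iterating produces the desired simulating trace in $C'$.

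It remains to lift $s_k \models \neg P(I,L)$ to $s'_k \models \neg P'(I',L')$. Consider the joint assignment $\sigma$ on $I \cup I' \cup L \cup L'$ that coincides with $s_k$ on $I \cup L$ and with $s'_k$ on $I' \cup L'$; this is well-defined because the two agree on $(I \cap I') \cup (L \cap L')$ by construction. Since $P$ depends only on $I \cup L$, we have $\sigma \models \neg P$, so the contrapositive of Def.~\ref{def:ressim}.\ref{def:ressim:prop} gives $\sigma \models \neg P'$. As $P'$ is a formula over $I' \cup L'$, this forces $s'_k \models \neg P'$, exhibiting an unsafe trace in $C'$ and contradicting its safety.

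The main obstacle is keeping the treatment of inputs coherent throughout the construction. The extension lemmas are phrased primarily in terms of latches, but inputs in $I \cap I'$ must also be aligned at every step for the induction hypothesis to carry through to the final property check. The dependency results (Lemmas~\ref{lem:konly} and~\ref{lem:konly_f}) do the crucial work here: they guarantee that reset and transition constraints on common latches are insensitive to private variables, so fixing the common inputs to the values dictated by the $C$-trace never creates a conflict when the remaining private variables of $C'$ are filled in. Once the bookkeeping of variable domains is made explicit, each step reduces to a direct application of an already-proved lemma.
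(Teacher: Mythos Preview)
Your argument is correct and follows essentially the same route as the paper: assume an unsafe trace in $C$, use Lemma~\ref{lem:resext} to lift the reset state, Lemma~\ref{lem:fext} iteratively to lift each transition, and then derive a property violation in $C'$ from Def.~\ref{def:ressim}.\ref{def:ressim:prop}. If anything, your treatment is more careful than the paper's about aligning common inputs and justifying the well-definedness of the joint assignment $\sigma$, but the underlying structure is the same.
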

\begin{proof}
	We assume $C'$ is safe, and provide a proof by contradiction by assuming
	$C$ is not
	safe.
	Suppose there is an assignment $s$ over $I \cup L$ satisfying
	$R(L_0)\wedge U_m\wedge
		\neg P(I_m,L_m)$. By Lemma~\ref{lem:resext}, the same
	assignment of $R(L_0\cap L'_0)$ can be extended to satisfy $R'(L'_0)$.
	Furthermore, by Lemma~\ref{lem:fext}, we can construct a satisfying
	assignment for $R'(L'_0)\wedge U'_m$ such that the common latches are
	always assigned the same values as according to $s$.
	Thus, by assumption that $C'$ is safe, $P'(I'_m,L'_m)$ follows which together
	with
	Def.~\ref{def:ressim}.\ref{def:ressim:prop} results in
	a contradiction.\qed
\end{proof}

\subsection{Completeness: Phase Abstraction Witness Construction}
The presented witness circuit construction for phase abstraction will always produce a valid witness circuit.
This claim is presented for each step of the construction in Theorem~\ref{thm:reduced},~\ref{thm:compwit}, and~\ref{thm:nfolded}.
The proofs are presented in the following. For Theorem~\ref{thm:compwit}, the definition of unfolded loop invariant is restated in greater detail.

\begin{reptheorem}
	Given a circuit $C=(I,L,R,F,P)$ and its reduced circuit
	$C'=(I',L',R',F',P')$. A witness circuit of $C'$ is
	also a
	witness circuit of $C$.
\end{reptheorem}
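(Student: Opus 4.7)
The plan is to establish the two defining requirements of a witness circuit of $C$ (Def.~\ref{def:wit}) for the given $W$. Since the statement ``$Q$ is an inductive invariant in $W$'' is a property of $W$ alone, it already holds by hypothesis. So the real work is to verify the three conditions of restricted simulation (Def.~\ref{def:ressim}) for $W$ over $C$.

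First, I would unpack Def.~\ref{def:reduced}: one has $L' \subseteq L$, $I' \subseteq I$, and $R'$, $F'$ are just $R$, $F$ restricted to $L'$, with the property carried over unchanged ($P' \equiv P$). By the very definition of $coi(P)$, the property and every $r_l, f_l$ with $l \in L'$ have support contained in $I' \cup L'$, so they are semantically independent of variables outside the cone of influence. In particular $P(I,L) \equiv P(I',L')$, and for every $l \in L'$, $r_l(I,L) \equiv r_l(I',L') = r'_l(I',L')$, and likewise for $f_l$. Adopting the standard convention that the auxiliary latches $M \setminus L'$ introduced by the witness construction are fresh names disjoint from $L \setminus L'$ (and if not, $\alpha$-rename them, which preserves both the simulation of $C'$ and the inductiveness of $Q$ in $W$), we get $L \cap M = L' \cap M$. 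Then for each $l \in L \cap M$ the witness hypothesis supplies $r'_l(I',L') \equiv s_l(J,M)$ and $f'_l(I',L') \equiv g_l(J,M)$; chaining these with the $coi$-based equivalences above yields the reset and transition conditions of restricted simulation for $W$ over $C$. The property condition $Q(J,M) \Rightarrow P(I,L)$ follows immediately from $Q \Rightarrow P'(I',L')$ together with $P(I,L) \equiv P(I',L')$.

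The main subtle point is the naming convention for $M \setminus L'$: without the fresh-naming assumption, a witness-auxiliary latch whose name coincides with some latch in $L \setminus L'$ would spuriously widen the common-latch set $L \cap M$ and force the corresponding $s_l, g_l$ to match functions of $C$ over which $W$ has no control. I would therefore make this WLOG-renaming explicit upfront, after which the proof collapses to a direct substitution chain and invokes no new SAT-level reasoning beyond what the reduced-circuit definition and the given witness hypothesis already provide. The argument is fully symbolic and constructive, so the same proof also justifies the implementation strategy of simply re-using the backend's witness verbatim as the certificate before $coi$-reduction.
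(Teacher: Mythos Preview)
Your proposal is correct and follows essentially the same route as the paper's proof: both observe that the inductive-invariant condition carries over trivially, identify the common latch set as $L' \cap M$, and then verify the three restricted-simulation checks using $P' = P$ and the fact that reset/transition functions of latches in $L'$ are unchanged. Your treatment is in fact more careful than the paper's on one point: the paper asserts without justification that $L' \cap M'$ is also the common latch set between $C$ and $W'$, whereas you correctly isolate this as requiring the fresh-naming convention $M \setminus L'$ disjoint from $L \setminus L'$ and handle it by $\alpha$-renaming.
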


\begin{proof}
	Let $W'=(J',M',S',G',Q')$ be a witness circuit of $C'$. First of all, we show
	that
	$W'$ simulates
	$C.$ As $L'\subseteq L, K =L'\cap M'$ is also the common set of
	latches between
	$C$ and $W'$. The resets of $C$ remain stratified. As the reset functions and
	transition functions of $K$ stay the
	same, together with Def.~\ref{def:reduced} where $P'=P$, the three checks in
	Def.~\ref{def:ressim} are satisfied. We conclude
	$W'$ simulates $C.$ By Def.~\ref{def:wit}, $Q'$ is an inductive invariant.
	Thus $W'$ is also a witness circuit of $C$.
\end{proof}

\begin{definition}[Unfolded cube lasso]
	Given a circuit $C=(I, L, R, F, P)$ with a phase number
	$n\in\mathbb{N^+}$, and its unfolded circuit $C'=(I',L',R',F',P')$.
	Let	$c_0,\cdots, c_d,\cdots,c_\delta,\cdots,c_{\delta+\omega}$ be the cube lasso
	of $C$. The unfolded cube lasso
	$c'_0,\cdots,c'_{\delta'},\cdots,c'_{\delta'+\omega'}$ is
	defined as follows: (i) $\delta'*n+d=\delta; \omega'*n + n-1=\omega.$
	(ii) For $i\in[0,\delta'+\omega'),
	c_i'=\bigwedge\limits_{j\in[0,n)}c_{i*n+j+d}(I^j_i,L^j_i).$

\end{definition}
\begin{definition}[Unfolded loop invariant]
	Given a circuit $C=(I, L,R, F, P)$ and its unfolded circuit
	$C'=(I', L',R', F',
	P')$, and
	a cube lasso $c_0,\cdots,c_{m}$ of $C$. Let $c'_0,{\ldots},c'_{m'}$
	be the unfolded cube lasso. \\
	The unfolded loop invariant $\phi$ is defined as
	$\bigvee\limits_{i\in[0,m']} c'_i.$
\end{definition}

\begin{lemma}
	The unfolded cube lasso is a cube lasso in the unfolded circuit.
\end{lemma}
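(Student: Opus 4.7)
The plan is to verify the three defining properties of a cube lasso for the sequence $c'_0,\ldots,c'_{\delta'+\omega'}$ inside the unfolded circuit $C'$: (i) $R'(L') \Rightarrow c'_0$; (ii) $c'_i \wedge (L'' \simeq F'(I',L')) \Rightarrow c'_{i+1}(L'')$ for each $i \in [0,\delta'+\omega')$; and (iii) the analogous closure $c'_{\delta'+\omega'} \wedge (L'' \simeq F'(I',L')) \Rightarrow c'_{\delta'}(L'')$. Unpacking the definition, $c'_i$ is exactly the conjunction placing each phase copy $L^j_i$ in the original cube $c_{in+j+d}$, so every unfolded check reduces to $n$ facts about the underlying cube lasso of $C$, each to be discharged by repeated use of the original cube-simulation property $c_k \wedge (L' \simeq F(I,L)) \Rightarrow c_{k+1}(L')$.

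For (i), I would iterate the original cube-simulation property $d$ times from $R(L) \Rightarrow c_0$ so that after $d$-forwarding the reset of $C'$ already forces $L^0_0 \in c_d$. The unfolded reset then defines each $L^j_0$ (for $j>0$) by one application of $F$ to the previous phase, so a short induction on $j$ using cube simulation gives $L^j_0 \in c_{d+j}$ for every $j \in [0,n)$, which is exactly $c'_0$. For (ii), assuming $c'_i$, the unfolded transition realises an $n$-step advance in the original circuit: $L^0_{i+1}$ is obtained from $L^{n-1}_i \in c_{in+d+n-1}$ and therefore lands in $c_{(i+1)n+d}$, and each subsequent phase $L^j_{i+1}$ is obtained by one further application of $F$, so a second induction on $j$ yields $L^j_{i+1} \in c_{(i+1)n+d+j}$, which is $c'_{i+1}$.

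For (iii), I would use the identities $\delta' n + d = \delta$ and $\omega' n + n - 1 = \omega$ built into the definition of the unfolded cube lasso to observe that the last original cube appearing in $c'_{\delta'+\omega'}$ is exactly $c_{\delta+\omega}$. Applying the original loop-closure $c_{\delta+\omega} \wedge (L' \simeq F(I,L)) \Rightarrow c_\delta(L')$ followed by the same $j$-induction places the next unfolded state in $c_\delta, c_{\delta+1}, \ldots, c_{\delta+n-1}$, i.e.\ in $c'_{\delta'}$. The main obstacle I anticipate is not the logical content, which is merely repeated application of cube simulation, but the index bookkeeping: aligning the divisibility relations with the phase-chaining of the unfolded transition so that one unfolded step corresponds precisely to $n$ original transitions, especially given the somewhat informal way the unfolded reset and transition functions are written in terms of cross-phase dependencies.
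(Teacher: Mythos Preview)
The paper states this lemma without proof; it appears in the Appendix immediately before the companion lemma about the unfolded loop invariant, and no argument is given for it. Your plan is the natural one and is correct: each of the three cube-lasso conditions for $c'_0,\ldots,c'_{\delta'+\omega'}$ in $C'$ unwinds, via the stratified reset and the phase-chained transition of Def.~\ref{def:unfold}, into $n$ consecutive applications of the original cube-simulation step $c_k\wedge(L'\simeq F(I,L))\Rightarrow c_{k+1}(L')$, with the loop closure $c_{\delta+\omega}\to c_\delta$ used exactly once in (iii) thanks to the divisibility constraints $\delta'n+d=\delta$ and $\omega'n+n-1=\omega$. Your reading that the circuit being unfolded is the $d$-forwarded one (so that phase~$0$ at reset already sits in $c_d$) is the intended one in the pipeline of Fig.~\ref{fig:pa}, and your remark that the only real content is the index bookkeeping---aligning one unfolded transition with $n$ original ones---is exactly right.
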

\begin{lemma}
	Given a circuit $C=(I, L,R, F, P)$ and its unfolded circuit $C'=(I', L,R', F',
	P')$ with a phase number $n$. Let $\Gamma \subseteq L$ be a set of latches
	that are associated with periodic
	signals determined from a cube lasso
	$c_0,{\ldots},c_{\delta+\omega}$ of $C$. The unfolded loop invariant $\phi$ is an
	inductive
	invariant in $C'$ for the
	property
	$\bigwedge\limits_{l\in\Gamma}(\bigwedge\limits_{i\in[0,n]}(l^i\simeq
	\lambda_l^i))$ , where $l^i$ is a temporal
	copy of $l$.
\end{lemma}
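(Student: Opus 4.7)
The plan is to verify the three conditions that make $\phi$ an inductive invariant of $C'$ implying the stated property: (a) initiality, $R'(L')\Rightarrow\phi$; (b) consecution, $\phi$ is preserved under transitions of $C'$; and (c) implication, $\phi\Rightarrow\bigwedge_{l\in\Gamma}\bigwedge_{j\in[0,n)}(l^j\simeq\lambda_l^j)$.

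For (a) and (b), I would invoke the preceding lemma, which asserts that the unfolded sequence $c'_0,\ldots,c'_{\delta'+\omega'}$ is itself a cube lasso in~$C'$. A cube lasso always induces an inductive invariant via disjunction: its first cube is implied by the reset, and every successor of a state in $c'_i$ lies either in $c'_{i+1}$ or, at the loop-back, in $c'_{\delta'}$. Since $\phi$ is exactly this disjunction, initiality and consecution follow immediately.

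The substantive step is~(c). Fix one disjunct $c'_i=\bigwedge_{j\in[0,n)}c_{i\cdot n+j+d}(I^j_i,L^j_i)$ of~$\phi$. By Def.~\ref{def:ps}, for every latch $l\in\Gamma$ and every phase index $j\in[0,n)$, the cube $c_{i\cdot n+j+d}$ entails $l\simeq\lambda_l^j$. Reinterpreting this on the $j$-th latch copy yields $c_{i\cdot n+j+d}(L^j_i)\Rightarrow(l^j\simeq\lambda_l^j)$; conjoining across $j$ and $l$ gives $c'_i\Rightarrow\bigwedge_{l\in\Gamma}\bigwedge_{j\in[0,n)}(l^j\simeq\lambda_l^j)$. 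Since this holds for every disjunct uniformly, $\phi$ implies the property.

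The main obstacle is the index bookkeeping: the duration $d$, phase number $n$, stem $\delta$, and loop $\omega$ must align across three objects---the cube lasso of $C$, the unfolded cube lasso of $C'$, and the periodic signal $\lambda_l$. The identities $k^0 n+d=\delta$ and $k^1 n=\omega+1$ from Def.~\ref{def:ps}, together with the definition of the unfolded cube lasso, guarantee that the range $i\in[0,m)$ of the disjunction precisely covers the positions at which Def.~\ref{def:ps} supplies the phase constraint. Once these alignments are checked, the three required implications become direct consequences of unpacking the definitions.
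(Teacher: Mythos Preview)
The paper states this lemma without proof (it appears in the Appendix immediately after the definition of the unfolded loop invariant, with no argument supplied), so there is no reference proof to compare against. Your three-part decomposition---initiality and consecution from the preceding lemma that the unfolded sequence is itself a cube lasso in $C'$, and the property implication from Def.~\ref{def:ps} applied phase-by-phase to each disjunct---is the natural argument and is correct. The index bookkeeping you flag is indeed the only place requiring care; the identities $k^0 n+d=\delta$ and $k^1 n=\omega+1$ from Def.~\ref{def:ps}, together with $\delta' n+d=\delta$ and $\omega' n+n-1=\omega$ from the unfolded-cube-lasso definition, give $\delta'+\omega'=k^0+k^1-1$, so every index $i$ appearing in $\phi$ lies in the range where the periodic-signal constraint $c_{i\cdot n+j+d}\Rightarrow(l\simeq\lambda_l^j)$ is guaranteed.
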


In the above lemma, the property states in essence that the periodic signals
always have the values according to their periodic patterns in the unfolded
cube lasso. Each cube in the unfolded cube lasso is a
partial
assignment to $L'$, which consists of $n$ copies of the original set of latches $L$.
Therefore we
use temporal copies such as $\lambda_{l}^i$ to represent the periodic
values of
the
latches.

\begin{reptheorem}
	Given circuit $C=(I,L,R,F,P)$, and factor circuit $C'=(I',L',\\R',F',P')$. Let
	$W'=(J',M',S',G',Q')$ be a witness circuit of $C'$, and $W=(J,M,S, G,Q)$
	constructed as in Def.~\ref{def:compwit}. Then $W$ is a witness
	circuit of $C$.
\end{reptheorem}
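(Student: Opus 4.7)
The plan is to verify the two defining conditions of a witness circuit (Def.~\ref{def:wit}) for $W$ with respect to $C$: restricted simulation of $C$ by $W$, and inductiveness of $Q$ in $W$. Because the factor circuit has $L' = L$ and $M = L \cup (M' \setminus L)$, the common latches between $W$ and $C$ are exactly $L$, so the restricted simulation conditions (Def.~\ref{def:ressim}) collapse to simple syntactic checks: $s_l = r_l$ and $g_l = f_l$ hold by construction for all $l \in L$ via Def.~\ref{def:compwit}, and the property implication $Q \Rightarrow P$ follows from $Q \Rightarrow Q' \Rightarrow P' = P$, using the simulation of $C'$ by $W'$ together with $P' = P$ from Def.~\ref{def:fac}.

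The bulk of the work is establishing that $Q = \phi(L) \wedge Q'(J', M')$ is inductive in $W$, and I would split the argument along the two conjuncts. For $\phi(L)$, I would invoke (or separately prove) an auxiliary lemma that the unfolded loop invariant is itself an inductive invariant in the unfolded circuit $C$: initiality holds because the reset state of the $d$-forwarded, $n$-unfolded circuit aligns with the cubes $c_d, \ldots, c_{d+n-1}$ of the original cube lasso, matching the $i = 0$ disjunct of $\phi$; inductiveness follows because one $C$-step advances each disjunct by $n$ positions along the lasso, with the wrap-around at the end of $\phi$ justified by the divisibility condition $k^1 \cdot n = \omega + 1$ from Def.~\ref{def:ps}. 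For $Q'(J', M')$, observe that $W$ reproduces the reset and transition functions of $W'$ verbatim on the witness-only latches $M' \setminus L$, but on shared latches $l \in L \cap M'$ it uses $r_l$ and $f_l$ from $C$ rather than $r'_l$ and $f'_l$ from $C'$. The discrepancy is reconciled by $\phi(L)$: on these shared latches, $\phi$ forces the values prescribed by the periodic signals, which are exactly the values that the factor circuit's modified reset and transition functions produce. So under $\phi$, the projected trajectories of $W$ onto $M'$ coincide with those of $W'$, and the inductiveness of $Q'$ in $W'$ transfers to $W$.

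The main obstacle will be rigorously justifying this trajectory correspondence between $W$ (projected onto $M'$) and $W'$. Concretely, one must show that for any state of $W$ satisfying $\phi(L)$, applying $W$'s transition function and then projecting onto $M'$ produces the same $M'$-values as projecting the state first and then applying $W'$'s transition, and analogously for reset states. This in turn relies on the defining property of periodic signals in Def.~\ref{def:ps} (each $c_i$ implies $l \simeq v^i$) together with the cube-lasso transition semantics. Once these correspondences are in hand, combining the individual inductiveness claims for $\phi$ and $Q'$ yields inductiveness of $Q$, completing the argument.
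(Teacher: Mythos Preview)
Your proposal is correct and follows essentially the same approach as the paper: verify restricted simulation via the syntactic identities from Def.~\ref{def:compwit} and $P'=P$, then establish inductiveness of $Q=\phi\wedge Q'$ by treating $\phi$ as an inductive invariant in the unfolded circuit and using it to align the $C$-transitions on $L$ with the $C'$-transitions that $W'$ expects. The paper's own proof is in fact more abbreviated than yours---it handles simulation explicitly but defers the entire inductiveness argument to Theorem~2 of~\cite{fmcad23}, whereas you spell out the two-conjunct decomposition and the trajectory-correspondence obstacle; one point the paper makes explicit that you omit is the stratification of $W$ (needed for Def.~\ref{def:ressim}), which follows because $R$ depends only on $I\cup L$ and hence introduces no cycles with the $s'_l$ on $M'\setminus L'$.
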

\begin{proof}
	First of all we show the composite witness $W$ simulates $C$. $W'$ and $C$
	are
	stratified and $R$ only references
	latches in $L_1$ thus no new cyclic dependencies is introduced.
	Therefore $W$ is stratified too. $L$ is the common set of
	latches for $W$ and $C$. By Def.~\ref{def:fac}, the reset and transition
	functions of
	$L$ are the same in both circuits, this satisfies the reset check as well as the
	transition check. Since $W'$ is a witness circuit of $C'$, we have
	$Q'\Rightarrow P'$
	and therefore $Q\Rightarrow P'$. We omit the rest of the proof as it follows
	the same
	logic as
	Theorem 2 in \cite{fmcad23}. \qed
\end{proof}

\begin{reptheorem}
	Given a circuit $C=(I, L, R, F, P)$ with a phase number $n\in\mathbb{N^+}$,
	its
	unfolded cicuit
	$C'=(I',L',R',F',P')$ with a witness circuit $W'=(J',M', S', G', Q')$.
	Let $W=(J,
		M, S, G, Q)$ be the circuit constructed as in
	Def.~\ref{def:unfoldwit}. Then $W$ is a witness circuit of $C$.
\end{reptheorem}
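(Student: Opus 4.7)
My plan is to verify the two requirements of Definition~\ref{def:wit}: that $W$ simulates $C$ under restricted simulation and that $Q$ is an inductive invariant in $W$. The common set of latches is $L$, identified with the copy $L^0 \subseteq M$ from Def.~\ref{def:unfoldwit}(2). The three clauses of Def.~\ref{def:ressim} then follow directly from the construction: $s_l \equiv r_l$ for $l \in L^0$ by combining Def.~\ref{def:unfoldwit}(3d) with the fact that $r'_l = r_l$ in Def.~\ref{def:unfold}(2); $g_l \equiv f_l$ by Def.~\ref{def:unfoldwit}(4e); and $Q \Rightarrow P$ because $q^0 = P(I^0, L^0)$ is a conjunct of $Q$.

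For inductiveness, the initial case is short. Since $b^0 = \top$ and all other $b^i$ and $e^i$ start at $\bot$, the clauses $q^1, q^2, q^3, q^5, q^6, q^7, q^8$ are either immediate or vacuous. Clause $q^4$ fires only at $i = 1$, where $R(L^0) \wedge S'(N)$ hold by Def.~\ref{def:unfoldwit}(3d,3f). For $q^0$, the initial $L^0$ is a reset state of $C$, and since $W'$ is a witness circuit of $C'$ we have $Q' \Rightarrow P' = \bigwedge_j P(I^j, L^j)$ on every $C'$-reset state, so the $j = 0$ conjunct yields $P(I^0, L^0)$. In the inductive step, the bookkeeping clauses $q^1, q^2, q^6, q^7, q^8$ are immediate from the shift-register transitions defined for $b^i$ and $e^i$; clauses $q^3$ and $q^4$ are maintained by the analogous shifts on $I^i, L^i$ together with the boundary case created whenever a $b^i$ flips from $\bot$ to $\top$.

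The substantive obligation is $q^5$ (and, through it, the post-step $q^0$). I would case-split on the pre-state value of $e^{n-2}$. In the \emph{sliding} case ($e^{n-2} = \bot$) the $e$-index advances from $i$ to $i+1$, but the tuple of values referenced by the new window coincides with the tuple referenced by the old window, only stored at latches shifted by one position via the $L^i, I^i$ shift registers; the $\mathrm{ite}$ in Def.~\ref{def:unfoldwit}(4h) also keeps $N$ unchanged, so $Q'$ on the new window is the same proposition as on the old one. In the \emph{wrapping} case ($e^{n-2} = \top$) the $e$-bits reset to zero and $N$ transitions via $g'_l$, and the tuple $(L^0_{\mathrm{new}}, \ldots, L^{n-1}_{\mathrm{new}}, N_{\mathrm{new}})$ is precisely the one-step $W'$-successor of the tuple certified by the pre-step $q^5$; inductiveness of $Q'$ in $W'$ then yields the new $Q'$. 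The matching post-step $q^0$ follows from $Q' \Rightarrow P'$: whenever the $i = 0$ window is in play, $L^0_{\mathrm{new}}$ lies inside it directly, and in the remaining cases the shift invariants $q^3$ together with the sliding argument pin down $L^0_{\mathrm{new}}$ as one of the sub-states carried by the window.

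The hardest part is the wrapping case of $q^5$: one must verify that the arguments passed to $g'_l$ in Def.~\ref{def:unfoldwit}(4h) --- drawn from $J^1$ and $M' \cap (I^{m-n+1} \cdots I^m \cup L^{m-n+1} \cdots L^m \cup N)$ --- correspond under the chosen substitution to exactly the $W'$-latches certified at phase $i = n-1$ of the pre-state, and that the post-state window at phase $i = 0$ is their $W'$-successor. The choice $m = 2n-2$ in Def.~\ref{def:unfoldwit}(2) together with the $n$-step cadence of the $e$-counter is what makes this alignment work; once it is verified, inductiveness of $Q'$ in $W'$ discharges $q^5$, and the restricted-simulation property check closes out the theorem.
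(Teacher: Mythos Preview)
Your proposal is correct and follows essentially the same approach as the paper: establish restricted simulation directly from the construction, then verify $Q$ is inductive by dispatching the bookkeeping conjuncts and case-splitting on the $e$-counter for $q^5$, invoking the inductiveness of $Q'$ in $W'$ in the wrapping case. The paper differs only in minor organization---it opens the consecution argument with an explicit top-level split on whether all $b^i$ are already $\top$ (the partially-initialized branch is dismissed as ``similar logic'', so your omission of it matches the paper's level of detail), it uses a three-way split on the $e^i$ values (all~$\bot$ / some / all~$\top$) where you use the equivalent but cleaner two-way split on $e^{n-2}$, and it includes a short stratification check for $S$ that you leave out.
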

\begin{proof}
	First, we prove the simulation relation. The $L^0$ latches have the same
	reset
	functions as in $C$, thus they are stratified, and they do not
	have dependencies on other latches outside $L^0$. The resets of $N$ are the
	same as in $W'$ thus also stratified. Based on Def~\ref{def:unfoldwit}, the
	rest
	of the latches do not depend on other latches therefore $S$ is stratified.
	Let
	$K=L\cap M$. By
	Def.~\ref{def:unfoldwit}, for $l\in L^0$, $s_l\equiv r'_l$.
	Together with Def.~\ref{def:unfold}, we have
	$R(K)\Rightarrow S(K)$. By Def.~\ref{def:unfoldwit}, $f_l\equiv
		g_l$ for $l\in K$.
	Also by $q^0$, we have $Q\Rightarrow P$. Therefore, $W$ simulates $C$.

	Next we show that the BMC check passes for $W$ such that
	$S(M)\Rightarrow Q(J,M)$. Since only $b^0$
	is set at reset, $q^1,q^2, q^3$ and $q^5$ are satisfied. All $e^i$s are set to
	$\bot$ at reset, thus $q^6,q^7,q^8$ are also satisfied.
	The reset in
	Def.~\ref{def:unfoldwit} directly implies $R'(L^0)$ and
	$S'(N)$, and by Def.~\ref{def:unfold} it also satisfies $R(L^0)$,
	which satisfies $q^4$. Based on Def.~\ref{def:ressim},
	$R'(L^0)\Rightarrow S'(M'\cap L^0)$, which together with the
	stratification of $S'$, results in $S'(M')$. This implies
	$Q'(J',M')$, based on the BMC check of the inductive invariant $Q'$.
	By Def.~\ref{def:ressim}, we have $P'(I',L')$, and based on
	Def.~\ref{def:unfold} this gives us $P(I^0,L^0)$ thus $q^0$ is also
	satisfied.

	Let
	$V_1$ be the unrolling of $W$ and
	we move on to prove $V_1\wedge
		Q(J_0,M_0)\Rightarrow Q(J_1,M_1)$ by providing a proof by contradiction.
	We assume
	$V_1\wedge
		Q(J_0,M_0)\wedge\neg Q(J_1,M_1)$ has a satisfying assignment, and
	fix this
	assignment. We consider two cases
	based on the values of
	$b^i_0$: (i) all $b^i_0$ are set to $\top$; (ii) $B_0$ is partially
	initialised, i.e., not
	all $b_0^i$ set to $\top$.
	We begin with the first case. Since $b^i$s always transition to
	$\top$
	or
	$b^{i-1}$, and under the assumption that all $b^i$s are $\top$,
	$q^1_1,q^2_1,q^4_1$ and $q^7_1$ are immediately satisfied.
	By Def.~\ref{def:unfoldwit}, $L^0$ transitions in the same way as in
	$C$, and the rest of the latches are simply
	copying the values during the transition, thus $q^3_1$ is satisfied.
	We move on to consider $q^5_1$.
	Based on
	$q^5_0$, we
	get a
	satisfying assignment for $\bigvee\limits_{i\in[0,n)}
		((\bigwedge\limits_{j\in[i,n-1)}\neg e^j)\wedge
		(\bigwedge\limits_{j\in[0,i)} e^j)\wedge
		Q'(J^0,M'\cap(L^i\cdots\cup
		L^{i+n-1}\cup N)) $. We thus consider three different cases
	based on the
	disjunction and the value of $e^i_0$s.
	\begin{itemize}
		\item Case where all $e^i_0$s are $\bot$: based on
		      Def.~\ref{def:unfoldwit},
		      $e^0_1$ is set to $\top$ and the rest set to $\bot$, thus $q^6_1, q^8_1$
		      are satisfied.
		      Since $e^{n-2}_0$ is set to $\bot$, the latches of $N$
		      stay the same after the transition. In this case, we already have
		      $Q'(J^0_0,M'_0\cap(L^0_0\cdots\cup
			      L^{n-1}_0\cup N_0)) $ satisfied, together with the transition function
		      defined and $q^3_0$, the same assignment satisfies
		      $(\bigwedge\limits_{j\in[1,n-1)}\neg e^j_1)\wedge e^0_1\wedge
			      Q'(J^0_1,M'_1\cap (L^1_1\cup
			      L^n_1\cup N_1))$, which satisfies $q^5_1$. By
		      Def.~\ref{def:ressim}, we have
		      $P'(I'_1,L'_1)$ which also implies $P(I^0_1,L^0_1)$. Therefore
		      $q_1^0$ is also satisfied.
		\item Case where not all $e^i_0$ set to $\top$: let $k$ be the index such
		      that
		      $0< k<n-2$,
		      $e^i_0\simeq \top$ for $i\in[0,k]$ and $e^i_0\simeq \bot$ for
		      $i\in(k,n-2]$,
		      and after the transition $e^i_1\simeq \top$ for $i\in[0,k+1]$ and
		      $e^i_1\simeq \bot$ for $i\in[k+2,n-2]$,
		      which immediately satisfies $q^6_1$ and $q^8_1$. Based on $q^5_0$,
		      we already have
		      $Q'(J^0_0,M'_0\cap (L^{k+1}_0\cdots\cup
			      L^{k+n}_0\cup N_0)) $. Similarly as the case above, $e^{n-2}_0$ is $\bot$
		      thus the rest follows.

		\item Case where all $e_0^i$s are $\top$: based on
		      Def.~\ref{def:unfoldwit},
		      all $e^i_1$s become $\bot$, thus $q^6_1$ and $q^8_1$ satisfied.  As
		      stated in
		      Def.~\ref{def:unfoldwit}, when $e_0\equiv \top$, latches in $N$
		      transition as in $W'$. In this case, based on $q^5_0$, we already have
		      $Q'(J^1_0,M'_0\cap (L^{n-1}_0\cdots\cup
			      L^{2n-2}_0\cup N_0))$ satisfied, together with the transition function
		      defined and $q^3_0$, the same assignment satisfies
		      $Q'(J^0_1,M'_1\cap (L^0_1\cup
			      L^{n-1}_1\cup N_1))$, which satisfies $q^5_1$. By
		      Def.~\ref{def:ressim}, we have
		      $P'(I'_1,L'_1)$ which implies $P(I^0_1,L^0_1)$. Therefore
		      $p_1^0$ is satisfied.

	\end{itemize}

	In either
	case, we can apply the same assignment to satisfy $Q(J_1,M_1)$ and
	therefore reach a contradiction. The rest of the proof follows similar
	logic.
	\qed
\end{proof}

\end{document}